\newcommand{\R}{{\bf R}}
\newcommand{\names}{{\sf Labels}}
\newcommand{\fopr}[1]{\mbox{${\sf FO}(+,\ab \times,\ab <,\ab 0,\ab 1,\ab #1)$}}
\newcommand{\vmax}{v_{\rm max}}
\newcommand{\ab}{\allowbreak}
\newcommand{\const}[1]{{\sf #1}}
\newcommand{\B}{{\sf B}}
\newcommand{\uB}[1]{{\sf B^+_{#1}}}
\newcommand{\bB}[1]{{\sf B^-_{#1}}}
\newcommand{\Ba}{{\sf \tau B}}
\newcommand{\Bm}{{\sf \partial B}}
\newcommand{\Br}{{\sf \rho B}}
\newcommand{\IC}{{\sf IC}}
\newcommand{\Cp}[1]{{\sf C^+_{#1}}}
\newcommand{\Cm}[1]{{\sf C^-_{#1}}}
\newcommand{\bead}[7]{\B(#1,\ab #2,\ab #3,\ab #4,\ab #5,\ab #6,\ab #7)}
\newcommand{\beadp}[3]{\B(#1,\ab #2,\ab #3)}
\newcommand{\ubead}[7]{\uB{}(#1,\ab #2,\ab #3,\ab #4,\ab #5,\ab #6,\ab #7)}
\newcommand{\bbead}[7]{\bB{}(#1,\ab #2,\ab #3,\ab #4,\ab #5,\ab #6,\ab #7)}
\newcommand{\ucone}[4]{\Cp{}(#1,\ab #2,\ab #3,\ab #4)}
\newcommand{\bcone}[4]{\Cm{}(#1,\ab #2,\ab #3,\ab #4)}
\newcommand{\uconep}[2]{\Cp{}(#1,\ab #2)}
\newcommand{\bconep}[2]{\Cm{}(#1,\ab #2)}
\newcommand{\ICs}[2]{\IC(#1,\ab #2)}
\newcommand{\fop}{${\sf FO}(+,\ab \times,\ab <,\ab 0,\ab 1)$}
 \title{A case study of the difficulty of quantifier elimination in constraint databases:
the alibi query in moving object databases.}
\author{
 \emph{Walied Othman}\footnote{walied.othman@uhasselt.be}
\emph{Bart Kuijpers}\footnote{bart.kuijpers@uhasselt.be}
 \emph{Rafael Grimson}\footnote{rafael.grimson@uhasselt.be}\\Theoretical Computer Science\\Hasselt University \& Transnational University of Limburg, Belgium
}
\date{}
\begin{abstract}
In the constraint database model, spatial and  spatio-temporal data are stored by
boolean combinations of polynomial equalities and inequalities over the real numbers.
The relational calculus
augmented with polynomial constraints  is the
standard first-order query language for constraint databases.
Although the expressive power of this query language has been studied extensively,
the difficulty of the efficient evaluation
of queries, usually involving some form of quantifier
elimination, has received considerably less attention.
The inefficiency of existing quantifier-elimination software and the intrinsic difficulty of quantifier elimination have proven to be a bottle-neck for for real-world implementations of constraint database
systems.

In this paper, we focus on a particular query, called the \emph{alibi query}, that was proposed  in the context of moving object databases and that asks whether two moving objects   whose positions are known at certain moments in time, could have possibly met, given certain speed constraints.
This query can be seen as a constraint database query and its evaluation relies on the elimination of a block of three existential quantifiers. Implementations of general purpose elimination algorithms, such as provided by QEPCAD, Redlog and Mathematica, are in the specific case, for practical purposes, too slow in answering the alibi query and fail completely  in the parametric case.

The main contribution of this paper is an   analytical solution
to the parametric alibi query, which can be used to answer this query in the specific case in
constant time. We also give an analytic solution to the alibi query at a fixed moment in time, which asks whether
two moving objects that are known at discrete moments in time could have met at a given moment in time, given some speed constraints.

The solutions we propose are based on geometric argumentation and they
illustrate the fact that some practical problems require  creative solutions, where at least in theory, existing systems could provide a solution.
 \end{abstract}
\keywords{Constraint databases, moving objects, beads, alibi query}
\begin{document}
\maketitle

\section{Introduction and summary}\label{sec:intro}

The framework of \emph{constraint databases} was introduced in
1990 by Kanellakis, Kuper and Revesz~\cite{kkr95} as an extension
of the relational database model that allows the use of  infinite,
but first-order definable relations rather than just finite
relations. It provides an elegant and powerful model for
applications that deal with infinite sets of points in some real
affine space $\R^n$, such as spatial and spatio-temporal databases~\cite{pvv_pods94}.
In the setting of the constraint model,
infinite relations in spatial or spatio-temporal databases are finitely represented as boolean
combinations of polynomial equalities and inequalities, which are
interpreted  over the real   numbers.

Various aspects of the constraint database model are well-studied by now.
For an overview of research results we refer to~\cite{cdbook} and
the textbook~\cite{revesz}. The relational calculus augmented
with polynomial constraints, or equivalently, first-order logic
over the reals augmented with relation predicates to address the
database relations $R_1,\ldots , R_m$, \fopr{R_1,\ldots,R_m}\ for
short, is the standard first-order query language for constraint
databases. The expressive power of first-order logic over the
reals, as a constraint database query language, has been studied
extensively~\cite{cdbook}.
It is well-known that first-order
constraint queries can be effectively evaluated~\cite{tarski,cdbook}.
However, the difficulty of the efficient evaluation
of first-order queries, usually involving some form of quantifier
elimination, has been largely neglected~\cite{bart-joos}. The existing constraint
database systems or prototypes, such as Dedale and Disco~\cite[Chapters 17 and 18]{cdbook} are based on general purpose
quantifier-elimination algorithms and are, in most cases,
restricted to work with linear data, i.e., they use first-order
logic over the reals without multiplication~\cite[Part
IV]{cdbook}.
Of the general purpose elimination algorithms~\cite{basu,collins,russian,hrs,renegar}, some are now available in
software packages such as QEPCAD~\cite{qepcad}, Redlog~\cite{redlog} and Mathematica~\cite{mathematica}.
 But the  intrinsic inefficiency of quantifier elimination and the inefficiency of
 its current implementations  represent a
bottle-neck for real-world implementations of constraint database
systems~\cite{bart-joos}.

In this paper, we focus on a case study of quantifier elimination in constraint databases.
Our example is the \emph{alibi query} in moving object databases, which was introduced and studied in the area of geographic information systems (GIS)~\cite{pfoser,egenhofer,hornsby,miller}.
 This query can be expressed in the constraint database formalism and, at least in theory could be
answered, both in the specific and the parametric case, by existing implementations of
quantifier elimination over the reals. The  evaluation of the alibi query adds up to the elimination of a block of three existential quantifiers.  It turns out that packages such as QEPCAD~\cite{qepcad}, Redlog~\cite{redlog} and Mathematica~\cite{mathematica}, can only solve
the alibi query in specific cases, with a running time that is not acceptable for
moving object database users. In the parametric case, these quantifier-elimination implementations fail miserably. The main contribution of this paper is a theoretic and practical solution to the alibi query in the parametric case (and thus the specific cases).

 The research on spatial databases, which started in the 1980s from
work in geographic information systems, was extended in the second
half of the 1990s to deal with spatio-temporal data. In this field,
one particular line of research, started by Wolfson, concentrates on
\emph{moving object databases} (MODs)~\cite{guting,wolfson}, a field
in which several data models and query languages have been proposed
to deal with moving objects whose position is recorded at discrete
moments in time. Some of these  models are geared towards handling
uncertainty that may come from various sources (measurements of
locations, interpolation, ...) and several query formalisms have
been proposed~\cite{su,floris,ons-icdt}. For an overview of models and
techniques for MODs, we refer to the book by G\"uting and
Schneider~\cite{guting}.

In this paper, we focus on the trajectories that are produced by
moving objects and which are stored in a database as a collection of
tuples $(\const {t}_i,\const{x}_i,\const{y}_i)$, $i=0,...,N$, i.e.,
as a finite sample of time-stamped locations in the plane. These
samples may have been obtained by GPS-measurements or from other
location aware devices.

One particular model for the management of the uncertainty of the
moving object's position in between sample points is provided by the
\emph{bead} model. In this model, it is assumed that besides the
time-stamped locations of the object also some background knowledge,
in particular a (e.g., physically or law imposed) speed limitation
$\const{v}_i$ at location $(\const{x}_i,\const{y}_i)$ is known. The
bead between two  consecutive sample points is defined as  the
collection of time-space points where the moving objects can have
passed, given the speed limitation (see Figure~\ref{necklace} for
an illustration). The  chain of beads connecting consecutive
trajectory sample points is called a \emph {lifeline
necklace}~\cite{egenhofer}. Whereas beads were already conceptually
known in the time geography of H\"agerstrand in the
1970s~\cite{hagerstrand}, they were introduced in the area of GIS by
Pfoser~\cite{pfoser} and later studied by
Egenhofer and Hornsby~\cite{egenhofer,hornsby}, and Miller~\cite{miller}, and in a query language
context by the present authors~\cite{ons-icdt}.

In this setting, a query of particular interest that has been
studied, mainly by Egenhofer and Hornsby~\cite{egenhofer,hornsby},
is the \emph{alibi query}. This boolean query asks whether two
moving objects, that are given by samples of time-space points and
speed limitations,  can have physically met. This question adds up
to deciding whether the necklaces of beads of these moving objects
intersect or not. This problem can be considered  solved in
practice, when we can efficiently decide whether two beads
intersect.

Although approximate solutions to this problem have been
proposed~\cite{egenhofer}, also an exact solution is possible. We
show that the alibi query can be formulated  in the  constraint
database model by means of a first-order
query.  Experiments with software packages such as
QEPCAD~\cite{qepcad}, Redlog~\cite{redlog}  and Mathematica~\cite{mathematica} on a variety
of beads show that deciding if two concrete beads intersect can be
computed on average in 2 minutes (running Windows XP Pro, SP2, with
a Intel Pentium M, 1.73GHz, 1GB RAM). This means that evaluating the
alibi query on the lifeline necklaces of two moving objects that
each consist of  100 beads would take around $100\times 100\times 2$
minutes, which is almost two weeks, or, if we work with ordered time intervals and first test on overlapping time intervals,  $(100+ 100)\times 2$ minutes, which is almost 7 hours. Clearly, such an amount of time
is unacceptable.

Another solution within the range of constraint databases is to find
a formula, in which the apexes and limit speeds of two beads appear
as parameters, that parametrically expresses that two beads
intersect. We call this problem the \emph{parametric alibi query}. A
quantifier-free formula for this parametric version could, in
theory,  also be obtained by eliminating one block of three
existential quantifiers from a formula with 17 variables using existing quantifier-elimination
packages.
We have attempted this approach  using Mathematica, Redlog and
QEPCAD, but after several days of running, with the configuration described above, we have interrupted the computation, without
successful outcome. Clearly, the
eliminating a block of three existential quantifiers from a formula in 17 variables is beyond the existing quantifier-elimination implementations.
In fact, it is well known that these implementations fail on complicated,
higher-dimensional   problems. The benefit of having a quantifier-free
first-order formula that expresses that two beads intersect is that
the alibi query on two beads can be answered in constant time. The
problem of deciding whether two lifeline necklaces intersect can then be
done in time proportional to the product (or the sum, if we first test on overlapping time intervals) of the lengths of the two
necklaces of beads.

The main contribution of this paper is the description of an
analytic solution to the alibi query. We give a quantifier-free
formula,  that contains square roots, however, and that expresses the
(non-)emptiness of the intersection of two parametrically given beads. Although, in a
strict sense, this formula cannot be seen as quantifier-free
first-order formula (due to the roots), it still gives the above
mentioned complexity benefits. Also, this formula with square roots can easily be turned into a quantifier-free formula of similar length. At the basis of our solution is a geometric theorem that
describes three exclusive cases in which beads can intersect. These three cases can then
be transformed into an analytic solution that can be used to answer the
alibi query on the lifeline necklaces of two moving objects in less than a minute.
This provides a practical solution to the alibi query.

To back up our claim that the execution time of our method requires milliseconds or less we implemented this in {\sc Mathematica} and compared it to using traditional quantifier elimination to decide this query. We have included this implementation in the Appendix and used it to perform numerous experiments which only confirm our claims.

We give another example of a problem where common sense prevails over the existing implementations of general quantifier elimination methods.
This problem is the \emph{alibi query at a fixed moment in time}, which asks whether
two moving objects that are known at discrete moments in time could have met at a given moment in time. This problem can be translated in deciding whether four disks
in the two-dimsnional plane
have a non-empty intersection. Again, this problem can be formulated in the context of the contraint model and adds up to the elimination of a block of two existential quantifiers.
Also for this problem we provide an exact solution in terms of a quantifier-free formula.

This paper is organized as follows.
In Section~\ref{sec:dbase}, we describe a model for trajectory (or  moving object)
databases with uncertainty using beads.
   In Section~\ref{sec:alibiquery}, we discuss the alibi query. The geometry of beads is
   discussed in Section~\ref{sec:prelim-beads}. An analytic
solution to this query is given in Section~\ref{sec:2dimalibi} and experimental results in Section~\ref{sec:experiments} of our implementation that can be found in
the Appendix.
The alibi query at a fixed moment in time is solved in Section~\ref{sec:4circles}.

\section{A model for moving object data with uncertainty}\label{sec:dbase}

In this paper, we consider moving objects in the two-dimensional
$(x,y)$-space $\R^2$ and describe their movement in the
$(t,x,y)$-space $\R\times\R^2$, where $t$ is time (we denote the set of the real numbers by $\R$).

In this section, we define trajectories, trajectory samples, beads and trajectory (sample) databases.
Although it is more traditional to speak about moving object databases, we
use the term trajectory databases to emphasize that we manage the trajectories  produced by moving objects.

\subsection{Trajectories and trajectory samples}

Moving objects, which we assume to be points, produce a special kind of curves, which
are parameterized by time and which we call \emph{trajectories}.

 \begin{definition}\rm
 A \emph{trajectory} $T$ is the graph of  a mapping $I\subseteq
\R\rightarrow \R^2: t\mapsto \alpha(t)=(\alpha_x(t), \alpha_y(t)), $
i.e., $$T=\{(t,\alpha_x(t),\alpha_y(t))\in\R\times\R^2\mid t\in I
\},$$ where $I$ is   the
\emph{time domain} of $T$.
\qed
\end{definition}

In practice, trajectories   are only known at discrete moments in time.  This partial knowledge of  trajectories is formalized in the following definition.
If we want to stress that some $t,x,y$-values (or other values) are constants, we will use sans serif characters.

\begin{definition}\rm
A \emph{trajectory sample}  is a finite set of time-space points $\{
(\const{t}_0, \ab \const{x}_0,\ab \const{y}_0),\ab (\const{t}_1, \ab \const{x}_1,\ab \const{y}_1),...,\ab (\const{t}_N, \ab
\const{x}_N,\ab \const{y}_N)\}$, on which  the order on time, $\const{t}_0<\const{t}_1<\cdots <\const{t}_N$, induces a natural order. \qed
 \end{definition}

For practical purposes, we may assume that the $(\const{t}_i,\const{x}_i,\const{y}_i)$-tuples of a trajectory sample contain rational values.

A trajectory $T$, which contains a  trajectory sample $\{(\const{t}_0, \ab \const{x}_0,\ab \const{y}_0),\ab (\const{t}_1, \ab \const{x}_1,\ab \const{y}_1),...,\ab (\const{t}_N, \ab
\const{x}_N,\ab \const{y}_N)\}$, i.e., $(\const{t}_i,\alpha_x(\const{t}_i),\alpha_y(\const{t}_i))=(\const{t}_i,\const{x}_i,\const{y}_i)$ for $i=0,...,N$,
is called a \emph{geospatial lifeline} for this trajectory sample~\cite{egenhofer}.
A common example of a lifeline, is the reconstruction of a trajectory from a trajectory samples by linear interpolation~\cite{guting}.

\subsection{Modeling uncertainty with beads}

Often, in practical applications, more is known about trajectories than merely some sample points
$(\const{t}_i,\const{x}_i,\const{y}_i)$. For instance, background knowledge like a physically or law imposed speed limitation $\const{v}_i$ at location $(\const{x}_i,\const{y}_i)$ might be available. Such a speed limit might even depend on $\const{t}_i$.
The speed limits that hold between two consecutive sample points can be used to model the uncertainty of a moving object's location between sample points.

More specifically, we know that at a time $t$,  $\const{t}_i\leq
t\leq \const{t}_{i+1}$, the object's distance to $(\const{x}_i,\const{y}_i)$  is at most
$\const{v}_i(t-\const{t}_i)$ and
its distance to
$(\const{x}_{i+1},\const{y}_{i+1})$ is at most
$\const{v}_i(\const{t}_{i+1}-t)$.
The spatial location of the  object is therefore somewhere in the intersection of the disc
with center  $(\const{x}_i,\const{y}_i)$ and radius $\const{v}_i(t-\const{t}_i)$ and the disc with center  $(\const{x}_{i+1},\const{y}_{i+1})$ and radius $\const{v}_i(\const{t}_{i+1}-t)$. The geometric location of these points is referred to as a \emph{bead}~\cite{pfoser,egenhofer} and defined, for general points $p=(t_p,x_p,y_p)$ and
$q=(t_q,x_q,y_q)$ and speed limit $\vmax$ as follows.

\begin{figure}[h]
\centering
\input{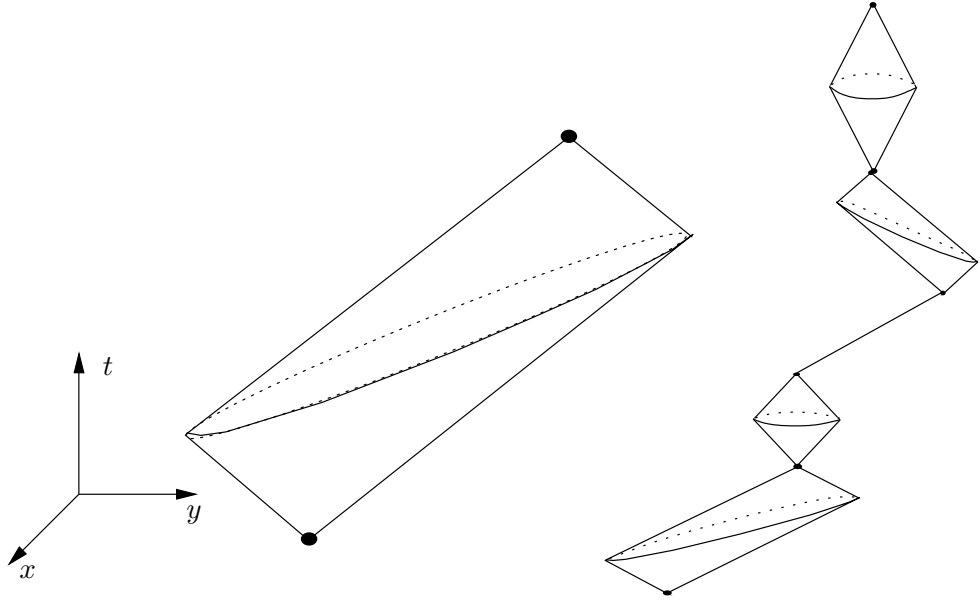}
\caption{A bead and a lifeline necklace.}\label{necklace}
\end{figure}

\begin{definition}\rm
The \emph{bead} with origin $p=(t_p,x_p,y_p)$, destination
$q=(t_q,x_q,y_q)$, with $t_p\leq t_q$, and maximal speed $\vmax\geq 0$ is the set of all points
$(t,x,y)\in\R\times\R^2$ that satisfy the following constraint formula\footnote{Later on, this type of formula's will be refered to as \fop-formulas.}
$$\displaylines{\quad \Psi_{\B}(t,x,y,t_p,x_p,y_p,t_q,x_q,y_q,\vmax)
:= (x-x_p)^2+(y-y_p)^2\leq (t-t_p)^2\vmax^2 \hfill{}\cr \hfill{}
\land \ (x-x_q)^2+(y-y_q)^2\leq (t_q-t)^2\vmax^2 \ \land \ t_p\leq
t\leq t_q.}
$$
We denote this set  by $\beadp{p}{q}{\vmax}$ or
$\bead{t_p}{x_p}{y_p}{t_q}{x_q}{y_q}{\vmax}$. \qed
 \end{definition}

 In the formula $\Psi_{\B}(t,x,y,t_p,x_p,y_p,t_q,x_q,y_q,\vmax)$, we consider $t_p,\ab x_p,\ab y_p,\ab t_q,\ab x_q,\ab y_q,\ab \vmax$ to be parameters, whereas $t,x,y$ are considered variables defining the subset of $\R\times\R^2$.

Figure~\ref{dissection} illustrates the notion of bead in
time-space. Whereas a continuous curve connecting the sample points
of a trajectory sample was called a geospatial lifeline, a chain of
beads connecting succeeding trajectory sample points is called a
\emph{lifeline necklace}~\cite{egenhofer}.

 \subsection{Trajectory databases}

We assume the
existence of an infinite set $\names =\{\const{a},\const{b}, ..., \const{a}_1,\const{b}_1,...,\const{a}_2,\const{b}_2,...\}$ of \emph{trajectory
labels}, that serve to identify individual trajectory samples.
We now define the notion of trajectory database.

\begin{definition}\rm
A \emph{trajectory (sample) database}
is a finite set of tuples $(\const{a}_i,\ab \const{t}_{i,j},\ab \const{x}_{i,j},\ab \const{y}_{i,j},\const{v}_{i,j})$, with $i=1,...,r$ and $j=0,...,N_i$,
such that $\const{a}_i\in \names$ cannot appear twice in combination with the same $t$-value,
such that $\{ (\const{t}_{i,0},\ab \const{x}_{i,0},\ab \const{y}_{i,0}), (\const{t}_{i,1},\ab \const{x}_{i,1},\ab \const{y}_{i,1}), \ab ..., (\const{t}_{i,N_i},\ab
\const{x}_{i,N_i},\ab \const{y}_{i,N_i})\} $ is a trajectory sample for each $i=1,...,r$ and such that the $\const{v}_{i,j}\geq 0$ for each $i=1,...,r$ and $j=0,...,N_i$.\qed
\end{definition}

\section{Trajectory queries and the alibi query}\label{sec:alibiquery}

In this section, we define the notion of trajectory database query, we show how constraint database languages can be used to query trajectories and we define the alibi query and the parametric alibi query.

 \subsection{Trajectory queries}

A \emph{trajectory database query} has been defined as a partial
computable function from trajectory databases  to trajectory
databases~\cite{ons-icdt}. Often, we are also interested in queries
that express a property, i.e., in boolean queries. More formally,
 we can say that a \emph{boolean trajectory database query} is a partial computable function
from trajectory databases to $\{{\sf True},{\sf False}\}$.

When we say that a function is computable, this is with respect to some
fixed encoding of the trajectory databases (e.g.,  rational numbers are represented as pairs of natural numbers in bit representation).

\subsection{A constraint-based query language}

Several languages have been proposed to express queries on moving object data and trajectory databases (see~\cite{guting} and references therein).
One particular language for querying trajectory data, that was recently studied in detail by the present authors,
is provided by the formalism of constraint databases. This query
language is a first-order logic
which extends first-order logic over the real numbers with a predicate $S$ to address the input trajectory database.
We denote this logic by \fopr{S} and define it  as follows.
\begin{definition}\rm
The language \fopr{S} is a two-sorted logic with \emph{label variables} $a, b, c,...$ (possibly with subscripts)
that refer to trajectory labels  and \emph{real variables} $x,y,z,..., v, ...$ (possibly with subscripts) that refer to real numbers.
The atomic formulas of \fopr{S} are
\begin{itemize}
\item $P(x_1,...,x_n)>0$, where $P$ is a polynomial with integer coefficients in the real variables $x_1,...,x_n$;
\item $a=b$; and
\item $S(a,t,x,y,v)$ ($S$ is a 5-ary predicate).
\end{itemize}
The formulas of \fopr{S} are built from the atomic formulas using the logical connectives
$\land, \lor,\lnot, ...$ and quantification over the two types of variables: $\exists x$, $\forall x$ and $\exists a$, $\forall a$.
\qed\end{definition}

The label variables are assumed to range over the labels occurring in the input trajectory database and the real variables are assumed to range over $\R$.
The formula $S(a,t,x,y,v)$ expresses that a tuple $(a,t,x,y,v)$ belongs to the input trajectory database. The interpretation of the other formulas is standard.

For example, the \fopr{S}-sentence
$$\exists {a}\exists {b}(\lnot ({a}={b})\land \forall {t}\forall {x}\forall {y} \forall {v} S({a},{t},{x},{y},{v})\leftrightarrow S({b},{t},{x},{y},{v})) $$
 expresses the boolean trajectory query that says that there are two identical trajectories in the input database with different labels.

When we instantiate the free variables in a \fopr{S}-formula $\varphi(a,\ab b,...,\ab t,\ab x,\ab y, ...)$
by concrete values $\const{a},\const{b},...,\const{t},\const{x},\const{y}, ...$ we write $\varphi[\const{a},\const{b},...,\const{t},\const{x},\const{y}, ...]$ for the formula we obtain.

\subsection{The alibi query}

The \emph{alibi query} is the boolean query which asks whether two moving objects, say with labels $\const{a}$ and $\const{a}'$, that are available as samples in a trajectory database, can have physically met.
Since the possible positions of these moving objects are, in between sample points, given by beads,
  the alibi query asks to decide if the two
lifeline necklaces of $\const{a}$ and $\const{a}'$ intersect or not.

More concretely, if the trajectory $\const{a}$ is given in the trajectory database by the tuples
$(\const{a}, \const{t}_0,\const{x}_0,\const{y}_0,\const{v}_0), ...., (\const{a}, \const{t}_N,\const{x}_N,\const{y}_N,\const{v}_N)$ and the trajectory $\const{a}'$ by the tuples
$(\const{a}', \const{t}'_0,\const{x}'_0,\const{y}'_0,\const{v}'_0), ...., (\const{a}',
\const{t}'_M,\const{x}'_M,\const{y}'_M,\const{v}'_M)$, then
$\const{a}$ has an alibi for not meeting $\const{a}'$ if for all $i$, $0\leq i\leq N-1$ and all $j$,
$0\leq j\leq M-1$, $$B(\const{t}_i,\const{x}_i,\const{y}_i,\const{t}_{i+1},\const{x}_{i+1}, \const{y}_{i+1},\const{v}_i)\cap
B(\const{t}'_j,\const{x}'_j,\const{y}'_j,\const{t}'_{j+1},\const{x}'_{j+1}, \const{y}'_{j+1},\const{v}'_j)=\emptyset.\eqno{(\dagger)}$$

We remark that the alibi query can be expressed by a formula in the logic \fopr{S},
which we know give. To start, we denote the subformula
$$\displaylines{\quad
S(a,\const{t}_1,\const{x}_1,\const{y}_1,\const{v}_1)\land S(a,t_2,x_2,y_2,v_2)
\land \hfill{}\cr\hfill{}\forall t_3\forall x_3\forall y_3\forall v_3
(
S(a,t_3,x_3,y_3,v_3)\rightarrow \lnot (t_1<t_3\land t_3<t_2)
),\quad
}$$
that expresses that
$(t_1,x_1,y_1)$ and $(t_2,x_2,y_2)$
are consecutive sample points on the trajectory $a$ by
$\sigma(a,t_1,x_1,y_1,v_1,t_2,x_2,y_2,v_2)$.

The alibi query on $\const{a}$ and $\const{a}'$ is then expressed as $\varphi_{\rm alibi}[\const{a},\const{a}']=$
$$
\displaylines{\quad
\lnot \exists t_1\exists x_1\exists y_1\exists v_1
\exists t_2\exists x_2\exists y_2\exists v_2
\exists t'_1\exists x'_1\exists y'_1\exists v'_1
\exists t'_2\exists x'_2\exists y'_2\exists v'_2\hfill{}\cr\qquad (
\sigma(\const{a},t_1,x_1,y_1,v_1,t_2,x_2,y_2,v_2)
\land \sigma(\const{a}',t'_1,x'_1,y'_1,v'_1,t'_2,x'_2,y'_2,v'_2)\land  \hfill{}\cr\quad \quad \     \exists t \exists x\exists y
(t_1\leq t\leq t_2\land t'_1\leq t\leq t'_2\land  \hfill{}\cr\qquad \quad
(x-x_1)^2+(y-y_1)^2\leq (t-t_1)^2v_1^2 \land
(x-x_2)^2+(y-y_2)^2\leq (t_2-t)^2v_1^2 \land   \hfill{}\cr\qquad\quad(x-x'_1)^2+(y-y'_1)^2\leq (t-t'_1)^2v'^2_1 \land
(x-x'_2)^2+(y-y'_2)^2\leq (t'_2-t)^2v'^2_1
)).\quad}
$$

It is well-known that \fopr{S}-expressible queries can be evaluated
effectively   on arbitrary trajectory database
inputs~\cite{cdbook,ons-icdt}. Briefly explained, this evaluation
can be performed by (1) replacing the occurrences of
$S(\const{a},t,x,y,v)$ by a disjunction describing all the sample
points belonging to the trajectory sample $\const{a}$; the same for
$\const{a}'$; and (2) eliminating all the quantifiers in the
obtained formula. In concreto, using the notation from above, each
occurrence of $S(\const{a},t,x,y,v)$ would be replaced in
$\varphi_{\rm alibi}[\const{a},\const{a}']$ by
$\bigvee_{i=0}^{N-1} (t=\const{t}_i\land x=\const{x}_i\land y=\const{y}_i\land v=\const{v}_i),$ and similar for $\const{a}'$.
This results in a (rather complicated) first-order formula over the
reals $\tilde\varphi_{\rm alibi}[\const{a},\const{a}']$ in which the
predicate $S$ does not occur any more. Since first-order logic over
the reals admits the elimination of quantifiers (i.e., every formula
can be equivalently expressed by a quantifier-free formula), we can
decide the truth value of $\tilde\varphi_{\rm
alibi}[\const{a},\const{a}']$ by eliminating all quantifiers from
this expression. In this case, we have to eliminate one block of
existential quantifiers.

We can however simplify the quantifier-elimination problem.
It is easy to see, looking at $(\dagger)$ above,
that $\lnot \tilde\varphi_{\rm alibi}[\const{a},\const{a}']$ is equivalent to
$$
\bigvee_{i=0}^{N-1} \bigvee_{j=0}^{M-1}
\psi_{alibi}[\const{t}_i,\const{x}_i,\const{y}_i,
\const{t}_{i+1},\const{x}_{i+1},\const{y}_{i+1}, \const{v}_i,
\const{t}'_j,\const{x}'_j,\const{y}'_j,\const{t}'_{j+1},\const{x}'_{j+1},\const{y}'_{j+1}, \const{v}'_j], $$
where the restricted alibi-query formula
$\psi_{alibi}(t_i,\ab x_i,\ab y_i,\ab t_{i+1},\ab x_{i+1},\ab y_{i+1}, v_i,
t'_j,\ab x'_j,\ab y'_j,\ab t'_{j+1},\ab x'_{j+1},\ab y'_{j+1},\ab  v'_j)$
abbreviates the formula
  $$\displaylines{\quad \exists t\exists x\exists y
(t_i\leq t\leq t_{i+1}\land t'_j\leq t\leq t'_{j+1}\land
(x-x_i)^2+(y-y_i)^2\leq (t-t_i)^2v_i^2 \land \hfill{}\cr \hfill{}
(x-x_{i+1})^2+(y-y_{i+1})^2\leq (t_{i+1}-t)^2v_i^2 \land
\hfill{}
\cr \hfill{} (x-x'_j)^2+(y-y'_j)^2\leq (t-t'_j)^2v'^2_j
\land(x-x'_{j+1})^2+(y-y'_{j+1})^2\leq (t'_{j+1}-t)^2v'^2_j )}$$
that expresses that two beads intersect.

So, the instantiated formula
$$\psi_{alibi}[\const{t}_i,\ab \const{x}_i,\ab \const{y}_i,\ab \const{t}_{i+1},\ab \const{x}_{i+1},\ab
\const{y}_{i+1}, \ab \const{v}_i, \ab \const{t}'_j,\ab
\const{x}'_j,\ab \const{y}'_j,\ab
\const{t}'_{j+1},\ab \const{x}'_{j+1},\ab \const{y}'_{j+1}, \ab
\const{v}'_j]$$  expresses   $(\dagger)$.
 To eliminate the existential block of quantifiers ($\exists t\exists x\exists y$)
 from this expression,  existing software-packages  for quantifier elimination,
 such as QEPCAD~\cite{qepcad}, Redlog~\cite{redlog} and Mathematica~\cite{mathematica} can be used.
We experimented QEPCAD, Redlog and Mathematica to decide if several beads intersected. The latter two programs have a similar performance and they outperform QEPCAD. To give an idea of their performance, we give some results with Mathematica:
the computation of $\psi_{alibi}[0,\ab 0,\ab 0,\ab 1,\ab 2,\ab 2,\ab
\sqrt{8},\ab 0,\ab 3,\ab 3,\ab 1,\ab 2,\ab 2,\ab 2]$ took $6$
seconds; that of  $\psi_{alibi}[0,\ab 0,\ab 0,\ab 1,\ab 2,\ab 2,\ab
\sqrt{8},\ab 0,\ab 3,\ab 4,\ab 1,\ab 2,\ab 2,\ab 2]$ took $209$
seconds and the computation of
$\psi_{alibi}[0,\ab 0,\ab 0,\ab 1,\ab -1,\ab -1,\ab
1,\ab 0,\ab 1,\ab 1,\ab 2,\ab -1,\ab 1,\ab 2]$ took $613$ seconds.
Roughly speaking, our experiments show that, using Mathematica , this quantifier elimination can be computed on average in about
2 minutes (running Windows XP Pro, SP2, with a Intel Pentium M,
1.73GHz, 1GB RAM). This means that evaluating the
alibi query on the lifeline necklaces of two moving objects that
each consist of  100 beads would take around $100\times 100\times 2$
minutes, which is almost two weeks, when applied naively and at most $(100+100)\times 2$ minutes or a quarter day, when first the intersection of time-intervals is tested.
Clearly, in both cases, such an amount of time
is unacceptable.

 There is a better solution, however, which we discuss next, that
 can decide if two beads intersect or not in a couple of milliseconds.

 \subsection{The parametric alibi query}
  The uninstantiated  formula  $$\psi_{alibi}(t_i,\ab x_i,\ab y_i,\ab t_{i+1},\ab x_{i+1},\ab y_{i+1}, v_i,
t'_j,\ab x'_j,\ab y'_j,\ab t'_{j+1},\ab x'_{j+1},\ab
y'_{j+1},\ab  v'_j)$$ can be viewed as a parametric version of the
restricted alibi query, where the free variables are considered
parameters. This formula contains three existential quantifiers and
the  existing soft\-ware-packages  for quantifier elimination could
be used to obtain a quantifier-free formula
$\tilde{\psi}_{alibi}(t_i,\ab x_i,\ab y_i,\ab t_{i+1},\ab
x_{i+1},\ab y_{i+1}, v_i, t'_j,\ab x'_j,\ab y'_j,\ab
t'_{j+1},\ab x'_{j+1},\ab y'_{j+1},\ab  v'_j)$
 that is equivalent to $\psi_{alibi}$. The formula $\tilde{\psi}_{alibi}$ could then be used to
straightforwardly  answer the alibi query in time linear in its size, which is  independent
of the size of the input and therefore constant.
 We have tried to eliminate the existential block of quantifiers $\exists t\exists x\exists y$
 from $\psi_{alibi}$ using Mathematica, Redlog  and QEPCAD. After some minutes of running, Redlog invokes QEPCAD. After several days of running QEPCAD on the
 configuration described above, we have interrupted the computation without result. Also Mathematica
 ran into problems without giving an answer. It is clear that eliminating a block of three existential quantifiers from a formula in 17 variables is beyond the existing quantifier-elimination implementations. Also, the instantiation of several parameters to adequately chosen constant values does not help to produce a solution.
For instance, without loss of generality we can
locate  $(t_i,\ab x_i,\ab y_i)$ in the origin $(0,0,0)$ and locate the other apex of the first bead above the $y$-axis, i.e., we can take $x_{i+1}=0$. Furthermore, we can take $v_i =1$ and $t_{i+1}=1$.
But Mathematica, Redlog  and QEPCAD cannot also not cope with this simplified situation.

 The main contribution of this paper is a the description of a quantifier-free formula equivalent to
 $\psi_{alibi}(t_i,\ab x_i,\ab y_i,\ab t_{i+1},\ab x_{i+1},\ab y_{i+1}, v_i,
t'_j,\ab x'_j,\ab y'_j,\ab t'_{j+1},\ab x'_{j+1},\ab y'_{j+1},\ab  v'_j)$.
The solution we give is not a quantifier-free first-order formula in a strict sense, since it contains
root expressions, but it can be easily turned into a quantifier-free first-order formula of
similar length.  It answers the alibi query on the lifeline necklaces of two moving objects that
each consist of  100 beads in less than a minute.
This description of this quantifier-free formula  is the subject of the next section.

\section{Preliminaries on the geometry of beads}\label{sec:prelim-beads}

Before, we can give an analytic solution to the alibi query and prove its correctness,
 we need to introduce some terminology concerning beads.

\subsection{Geometric components of  beads}

Various geometric properties of beads have already been
described~\cite{egenhofer,ons-icdt,miller}. Here, we need some more
definitions and notations to describe various components of a bead.
These components are illustrated in Figure~\ref{dissection}.
 In this section, let  $p=(t_p,x_p,y_p)$ and
$q=(t_q,x_q,y_q)$  be two time-space points, with $t_p\leq t_q$ and
let $\vmax$ be a positive real number.

The bead $\beadp{p}{q}{\vmax}$ is the intersection of two filled cones, given by the
equations
$(x-x_p)^2+(y-y_p)^2 \leq (t-t_p)^2\vmax^2 \ \land \ t_p\leq t$
and
$(x-x_q)^2+(y-y_q)^2 \leq  (t_q-t)^2\vmax^2 \ \land \ t\leq t_q$ respectively.
The border of its
\emph{bottom cone} is the set of all points $(t,x,y)$ that satisfy
$$
\Psi_{\Cm{}}(t,x,y,t_p,x_p,y_p,\vmax):=
(x-x_p)^2+(y-y_p)^2 = (t-t_p)^2\vmax^2 \ \land \ t_p\leq t$$ and is
denoted by $\bconep{p}{\vmax}$ or $\bcone{t_p}{x_p}{y_p}{\vmax}$;
and the border of its \emph{upper cone} is the set of all points $(t,x,y)$ that
satisfy  $$ \Psi_{\Cp{}}(t,x,y,t_q,x_q,y_q,\vmax):=
(x-x_q)^2+(y-y_q)^2 = (t_q-t)^2\vmax^2 \ \land \ t\leq t_q$$ and is
denoted by $\uconep{q}{\vmax}$ or $\ucone{t_q}{x_q}{y_q}{\vmax}$.

The set of the two apexes of
$\beadp{p}{q}{\vmax}$ is denotes  $\tau\beadp{p}{q}{\vmax}$, i.e.,
$\tau\beadp{p}{q}{\vmax}=\{p,q\}.$

We call the topological border of the bead $\beadp{p}{q}{\vmax}$ its
\emph{mantel} and denote it by $\partial\beadp{p}{q}{\vmax}$. It can
be easily verified that
 the mantel  consists of the set of points
$(t,x,y)$ that satisfy
$$\displaylines{\quad \Psi_{\partial}(t,x,y,t_p,x_p,y_p,t_q,x_q,y_q,\vmax):= t_p\leq t\leq t_q \land\hfill{}\cr\hfill{}
 \left(2x(x_p-x_q)+x_q^2-x_p^2+2y(y_p-y_q)+y_q^2-y_p^2\leq \vmax^2
2t(t_p-t_q)+t_q^2-t_p^2 \land\right.\hfill{}\cr\hfill{} (x-x_p)^2+(y-y_p)^2 =
(t-t_p)^2\vmax^2\lor (x-x_q)^2+(y-y_q)^2 = (t_q-t)^2\vmax^2 \hfill{}\cr\hfill{}
\land \left.2x(x_p-x_q)+x_q^2-x_p^2+2y(y_p-y_q)+y_q^2-y_p^2\geq
\vmax^2\left(2t(t_p-t_q)+t_q^2-t_p^2\right) \right).}$$

The first conjunction describes the lower half of the mantel and the
second conjunction describes the upper half of the mantel.
The upper and lower half of the mantel are separated by a plane. The
intersection of this plane with  the bead is an ellipse, and the
border of this ellipse is what we will refer to as the \emph{rim} of
the bead. We denote the rim of the bead $\beadp{p}{q}{\vmax}$ by
$\rho\beadp{p}{q}{\vmax}$ and remark that it is described by the
formula
$$\displaylines{\quad
 \Psi_{\rho}(t,x,y,t_p,x_p,y_p,t_q,x_q,y_q,\vmax):= \hfill{}\cr\hfill{}
(x-x_p)^2+(y-y_p)^2 = (t-t_p)^2\vmax^2 \land t_p\leq t\leq t_q
\land \hfill{}\cr\hfill{} 2x(x_p-x_q)+x_q^2-x_p^2+2y(y_p-y_q)+y_q^2-y_p^2 =
\vmax^2\left(2t(t_p-t_q)+t_q^2-t_p^2\right).\quad } $$

The plane in which the rim lies splits the bead into an
\emph{upper-half bead} and a \emph{bottom-half bead}.
 The  \emph{bottom-half bead} is the set of all points $(t,x,y)$ that
satisfy
 $$\displaylines{\Psi_{\B^-}(t,x,y,t_p,x_p,y_p,t_q,x_q,y_q,\vmax):=\hfill{}\cr\hfill{}
(x-x_p)^2+(y-y_p)^2 \leq (t-t_p)^2\vmax^2\land t_p\leq t\leq t_q
\land \hfill{}\cr\hfill{} 2x(x_p-x_q)+x_q^2-x_p^2+2y(y_p-y_q)+y_q^2-y_p^2\leq
\vmax^2\left(2t(t_p-t_q)+t_q^2-t_p^2\right)}$$ and is denoted by
$\bbead{t_p}{x_p}{y_p}{t_q}{x_q}{y_q}{\vmax}$.

The upper bead is the set of all points $(t,x,y)$ that satisfy
 $$\displaylines{\Psi_{\B^+}(t,x,y,t_p,x_p,y_p,t_q,x_q,y_q,\vmax):=\hfill{}\cr\hfill{}
(x-x_q)^2+(y-y_q)^2 \leq (t_q-t)^2\vmax^2\land t_p\leq t\leq t_q
\land \hfill{}\cr\hfill{} 2x(x_p-x_q)+x_q^2-x_p^2+2y(y_p-y_q)+y_q^2-y_p^2 \geq
\vmax^2\left(2t(t_p-t_q)+t_q^2-t_p^2\right)}$$
 and is denoted by
$\ubead{t_p}{x_p}{y_p}{t_q}{x_q}{y_q}{\vmax}$.

\begin{figure}[h]
\centering
\input{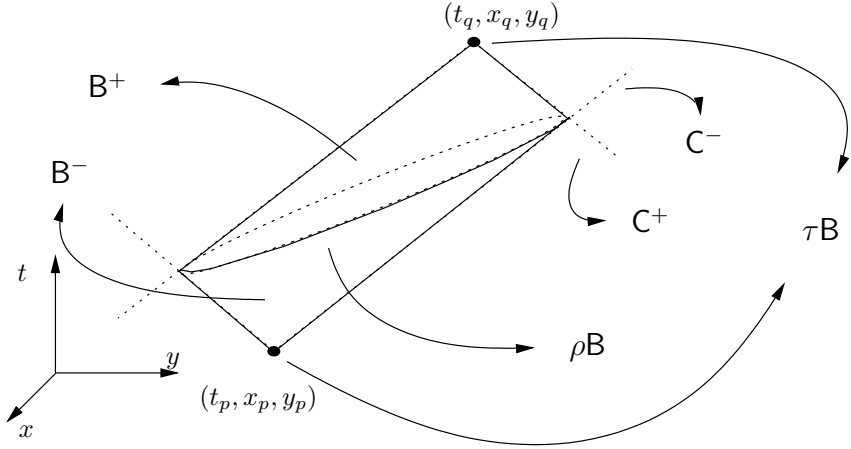}
\caption{A dissection of the bead
$\bead{t_p}{x_p}{y_p}{t_q}{x_q}{y_q}{\vmax}$.}\label{dissection}
\end{figure}

\subsection{The intersection of two cones}\label{subsec:initialcontact}

Let  $\bcone{t_1}{x_1}{y_1}{v_1}$ and
$\bcone{t_2}{x_2}{y_2}{v_2}$ be two bottom cones.
A bottom cone, e.g.,  $\bcone{t_1}{x_1}{y_1}{v_1}$,  can be seen as a circle
in 2-dimensional space $(x,y)$-space with center
$(x_1,y_1)$ and linearly growing radius $(t-t_1)v_1$ as $t_1\leq
t$.

Let us assume that the apex of neither of these
cones is inside the other cone, i.e.,  $(x_1-x_2)^2+(y_1-y_2)^2
> (t_1-t_2)^2v_1^2\lor t_1<t_2$ and
$(x_1-x_2)^2+(y_1-y_2)^2\ab
>(t_1-t_2)^2v_2^2\ab\lor t_2<t_1$.
This assumption implies that at $t_1$ and
$t_2$ neither radius is larger than or equal to the distance between
the two cone centers. So, at first the two circles are disjoint and after growing for some time they intersect in one point.
 We call the first (in time) time-space point  where  the two circles touch
in a single point, and thus for which the sum of the two radii is
equal to the distance between the two centers
the \emph{initial contact} of the
two cones $\bcone{t_1}{x_1}{y_1}{v_1}$
and $\bcone{t_2}{x_2}{y_2}{v_2}$. It is the unique point
$(t,x,y)$ that satisfies the  formula
$$\displaylines{\quad \Psi_{IC^-}(t,x,y,t_1,x_1,y_1,v_1,t_2,x_2,y_2,v_2):= t_1\leq t \wedge t_2 \leq t\land\hfill{} \hfill{}\cr\hfill{}
(x-x_1)^2+(y-y_1)^2 = (t-t_1)^2v_1^2\land
(x-x_2)^2+(y-y_2)^2 = (t-t_2)^2v_2^2\land \hfill{}\cr\hfill{}
( (t-t_1)v_1 + (t-t_2)v_2)^2 = (x_1-x_2)^2+(y_1-y_2)^2.\quad }
$$
The initial contact of two cones $\ucone{t_1}{x_1}{y_1}{v_1}$
and $\ucone{t_2}{x_2}{y_2}{v_2}$ is given by the formula $\Psi_{IC^+}(t,x,y,t_1,x_1,y_1,v_1,t_2,x_2,y_2,v_2)$ that we obtain from $\Psi_{IC^-}$
by replacing in
 $t_1\leq t \wedge t_2 \leq t$ by
$t\leq t_1 \wedge t \leq t_2$.
We denote the singleton sets containing the initial contacts by
$\ICs{\bcone{t_1}{x_1}{y_1}{v_1}}{\bcone{t_2}{x_2}{y_2}{v_2}}$ and
$\ICs{\ucone{t_1}{x_1}{y_1}{v_1}}{\ucone{t_2}{x_2}{y_2}{v_2}}$.

\begin{figure}[h]
\centering
\input{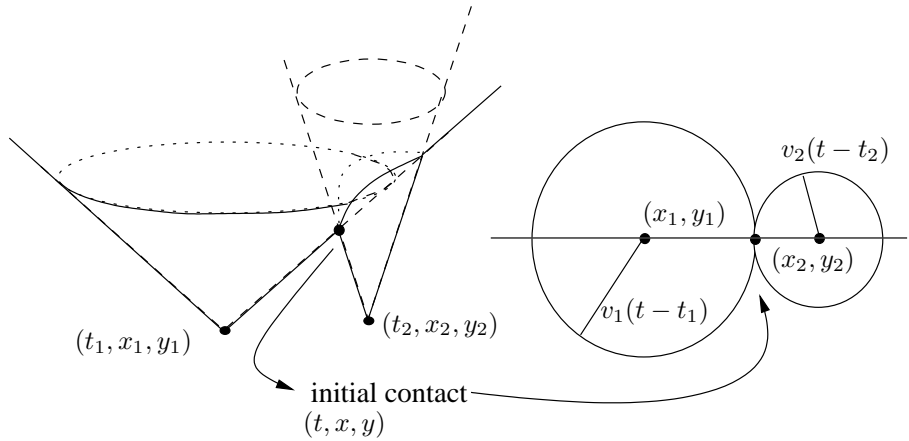}
\caption{Intersecting cones and their initial contact (3-dimensional  view on the left and
2-dimensional view on the right).}\label{intersectingcones}
\end{figure}

From the last equation in of the system in $\Psi_{IC^-}$ and $\Psi_{IC^+}$,
we easily obtain  $t=\frac{\sqrt{(x_1-x_2)^2+(y_1-y_2)^2}+t_1v_1+t_2v_2}{v_1+v_2}.$
To compute the other two coordinates $(x,y)$ of the initial contact, we observe that for in the plane of this time value $t$, it is  on the line segment bounded by $(x_1,y_1)$ and
$(x_2,y_2)$ and that its distance from $(x_1,y_1)$ is $v_1(t-t_1)$ and its
distance from $(x_1,y_1)$ is $v_2(t-t_2)$.  We can conclude that the initial
contact has $(t,x,y)$-coordinates given by the following system of equations
$$\left\{\begin{array}{lll}
 t & = & \frac{\sqrt{(x_1-x_2)^2+(y_1-y_2)^2}+t_1v_1+t_2v_2}{v_1+v_2} \\
 x & = &  x_1+v_1(t-t_1)\frac{x_2-x_1}{\sqrt{(x_2-x_1)^2+(y_2-y_1)^2}} \\
 y & = &  y_1+v_1(t-t_1)\frac{y_2-y_1}{\sqrt{(x_2-x_1)^2+(y_2-y_1)^2}}.
\end{array}\right.$$
This means that we can give more explicit descriptions to replace $\Psi_{IC^-}$ and $\Psi_{IC^+}$.

\section{An analytic solution to the alibi query}\label{sec:2dimalibi}
In this section, we first describe our solution to the alibi query on a geometric level.
Next, we prove its correctness and transform it into an analytic solution and finally we show how to construct a quantifier-free first-order formula out of the analytic solution.

\subsection{Preliminary geometric considerations}\label{subsec:geometricoutline}

\begin{figure}[h]
\centering
\input{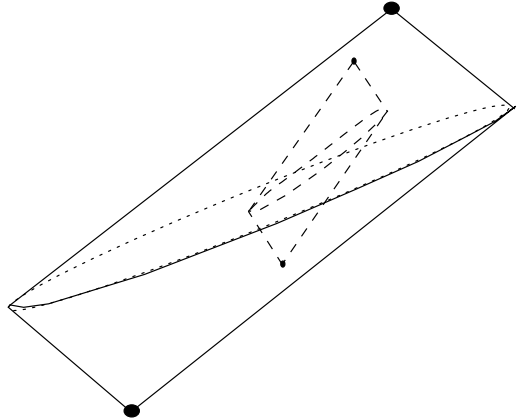}
\caption{One bead is contained in the other.}\label{case1}
\end{figure}

The solution we present is based on the observation that the two main cases of intersection (that do not exclude each other) are: (1)  an apex of one bead is in the other; and (2) the mantels of the beads intersect.

The inclusion of one bead in the other, illustrated in Figure~\ref{case1}, is an example of the first case. It is clear that
if no apex is contained in another bead and we still  assume that the beads intersect, than their mantels must intersect. We show this more formally in Lemma~\ref{mantelcut}. In this second case, the idea is to find a special point (a witness point) that is easily computable and necessarily  in the intersection.

\begin{figure}[h]
\centering
\input{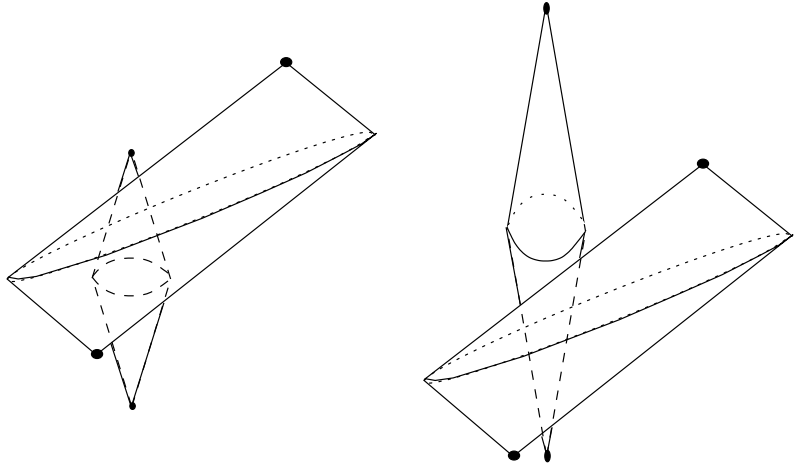}
\caption{Clean cut between cones.}\label{case3}
\end{figure}

Let us consider two beads with
bottom cones $\bcone{t_1}{x_1}{y_1}{v_1}$
and $\bcone{t_2}{x_2}{y_2}{v_2}$ and let us assume that none of the apeces is inside the other cone. One special point is the point of initial contact $\ICs{\bcone{t_1}{x_1}{y_1}{v_1}}{\bcone{t_2}{x_2}{y_2}{v_2}}$. However, this point can not be guaranteed to be in the intersection if the mantels of the two beads intersect, as we will show in the following example. Consider two beads with bottom cones $\bcone{0}{0}{0}{1}$ and $\bcone{0}{2}{0}{1}$.  The intersection is a hyperbola in the plane $x=1$ with equation $t^2-y^2=1$. The initial contact of the two bottom cones is the point $(1,0,1)$.
To show that this point of  initial contact does not need to be in the intersection of the two beads, the idea is to cut this point out of the intersection as follows. Suppose one bead has apexes, $(0,0,0)$ and $(a,b,c)$ and speed $1$. The plane in which its rim lies is given by $-2ax+a^2-2by+b^2+2ct-c^2=0$. This plane cuts the plane $\alpha$ given by the equality $ x=1$ in a line given by the equation  $-2by+2ct-2a+a^2-c^2=0$. Clearly, we can choose $(a,b,c)$ such that the line contains the points $\left(\frac{\sqrt{5}}{2},1,\frac{1}{2}\right)$ and $\left(\sqrt{2},1,1\right)$. Everything below this line will be part of the first bead and the second cone, but the initial contact is situated above the line, effectively cutting it out of the intersection. All this is illustrated in Figure~\ref{iccutout}.

\begin{figure}[h]
\centering
\input{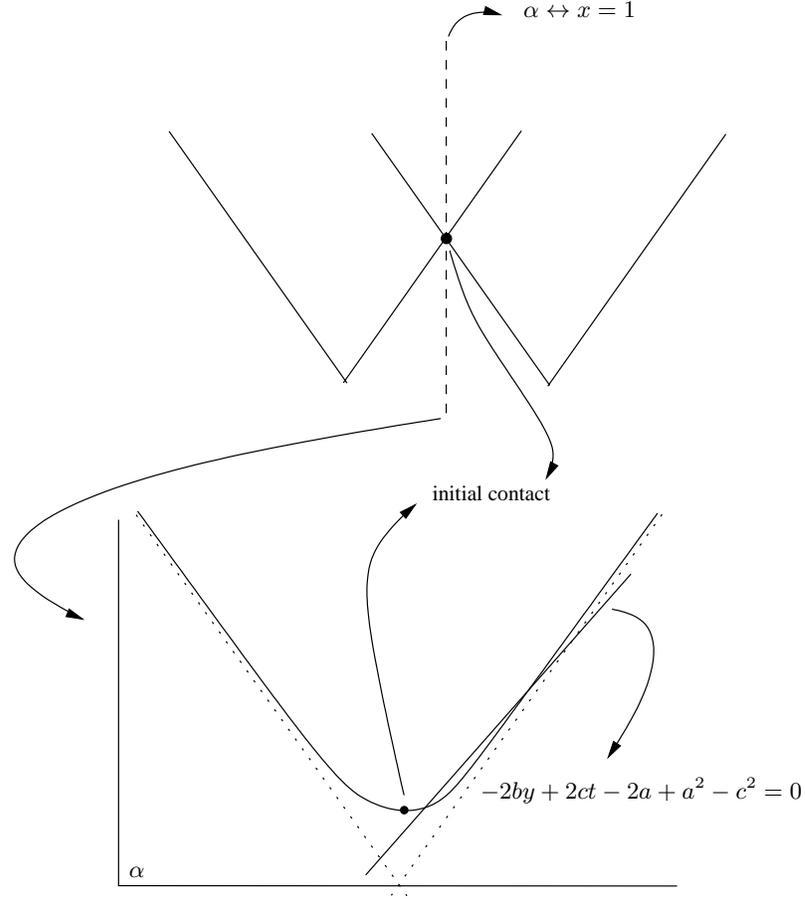}
\caption{The initial contact cut out.}\label{iccutout}
\end{figure}

We notice how the plane in which the rim lies and the rim itself is the evil do-er. If neither rim intersects the mantel of the other bead, then the intersection of mantels is the same as an intersection of cones. In which case the initial contact will not be cut out and can be used to determine if there is intersection in this manner.

Using contraposition on the statement in the previous paragraph we get: if there is an intersection and no initial contact is in the intersection then a rim must intersect the other bead's mantel.

To verify intersection with the apexes and initial contacts is straightforward. Verifying if a rim intersects a mantel results in solving a quartic polynomial equation in one variable and verifying the solution in a single inequality in which no variable appears with a degree higher than one.

\subsection{Outline of the solution}\label{subsec:2dimalibioutline}

Suppose, for the remainder of this section, we wish to verify if the
beads $\B_1=\bead{t_1}{x_1}{y_1}{t_2}{x_2}{y_2}{v_1}$ and
$\B_2=\bead{t_3}{x_3}{y_3}{t_4}{x_4}{y_4}{v_2}$ intersect. Moreover,
we assume the beads are non-empty, i.e., $(x_2-x_1)^2+(y_2-y_1)^2
\leq (t_2-t_1)^2v_1^2$ and $(x_4-x_3)^2+(y_4-y_3)^2 \leq
(t_4-t_3)^2v_2^2$.

We first observe that  an intersection between beads can
be classified into three, mutually exclusive,  cases.
The three cases then are:

\begin{enumerate}
\item[{\bf (I)}] an apex of one bead is contained in the other, i.e.,
$$\tau\B_1 \cap\B_2 \not=\emptyset \mbox{\ or }\B_1 \cap\tau\B_2 \not=\emptyset; $$
\item[{\bf (II)}]  not {\bf (I)}, but the rim of one bead intersects the mantel of the other, i.e.,
$$ \Br_1 \cap \Bm_2 \neq \emptyset \ \mbox{or} \ \Br_2 \cap \Bm_1 \neq \emptyset;$$
\item[{\bf (III)}]  not {\bf (I)} and not {\bf (II)} and the initial contact of the upper or lower cones is in the intersection of the
beads, i.e.,
$$\ICs{\Cm{1}}{\Cm{2}}\subset\B_1\cap\B_2\ \mbox{or}
\  \ICs{\Cp{1}}{\Cp{2}}\subset\B_1\cap\B_2.$$
\end{enumerate}

If none of these three cases occur then the beads do not intersect, as we show in the correctness proof below.
First, we give the following geometric lemma.

\begin{lemma}\label{mantelcut}
If $\B_1\cap\B_2\neq\emptyset$,
$\Ba_1\cap\B_2=\emptyset$ and $\Ba_2\cap\B_1=\emptyset$, then $\Bm_1 \cap \Bm_2 \neq
\emptyset$.\end{lemma}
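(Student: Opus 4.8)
The plan is to prove the contrapositive in disguise: assuming the two beads meet but their mantels are disjoint, I will show that one bead must lie entirely inside the other, and hence that an apex of the inner bead lies in the outer one, contradicting the hypotheses $\Ba_1\cap\B_2=\emptyset$ and $\Ba_2\cap\B_1=\emptyset$. The starting observation is that each bead is a \emph{compact convex body}: the two filled cones defining $\beadp{p}{q}{\vmax}$ are convex (the region on or above a circular cone is convex, by the triangle inequality applied to $\|(x,y)-(x_p,y_p)\|\le \vmax(t-t_p)$), and a bead is their intersection, which is moreover bounded since $t_p\le t\le t_q$. Assuming the non-degeneracy already present in the setup (the strict version of the non-emptiness inequality), each bead is full-dimensional in $\R\times\R^2$, so its mantel $\Bm_i=\partial\B_i$ is the topological boundary and is homeomorphic to the sphere $S^2$; in particular it is connected.

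The heart of the argument is the following containment dichotomy: if $\B_1\cap\B_2\neq\emptyset$ and $\Bm_1\cap\Bm_2=\emptyset$, then $\B_1\subseteq\mathrm{int}(\B_2)$ or $\B_2\subseteq\mathrm{int}(\B_1)$. To prove it, note that $(\R\times\R^2)\setminus\Bm_2$ is the disjoint union of the two open sets $\mathrm{int}(\B_2)$ and the exterior $E_2:=(\R\times\R^2)\setminus\B_2$. Since $\Bm_1$ is connected and avoids $\Bm_2$, it lies entirely in $\mathrm{int}(\B_2)$ or entirely in $E_2$; symmetrically $\Bm_2$ lies entirely in $\mathrm{int}(\B_1)$ or in $E_1$. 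If $\Bm_1\subseteq\mathrm{int}(\B_2)$, then $\B_1\subseteq\mathrm{int}(\B_2)$: any point of $\B_1$ is either already on $\Bm_1$, or is interior and hence lies strictly between the two endpoints of a chord of $\B_1$ on $\Bm_1\subseteq\mathrm{int}(\B_2)$, so by convexity of $\mathrm{int}(\B_2)$ it lies in $\mathrm{int}(\B_2)$. Thus this case is settled, and symmetrically if $\Bm_2\subseteq\mathrm{int}(\B_1)$.

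It remains to rule out the case $\Bm_1\subseteq E_2$ and $\Bm_2\subseteq E_1$, i.e.\ $\Bm_1\cap\B_2=\emptyset$ and $\Bm_2\cap\B_1=\emptyset$. Pick $z\in\B_1\cap\B_2$. Since $z\in\B_2$ but $\Bm_1\cap\B_2=\emptyset$, we have $z\notin\Bm_1$, so $z\in\mathrm{int}(\B_1)$; likewise $z\in\mathrm{int}(\B_2)$. Now take any $w\in\Bm_1$. The segment $[z,w]$ lies in $\B_1$ by convexity, and along it we pass from $z\in\mathrm{int}(\B_2)$ to $w\in E_2$; by connectedness of the segment it must cross $\Bm_2$, producing a point of $\B_1\cap\Bm_2$, contradicting $\Bm_2\cap\B_1=\emptyset$. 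This establishes the dichotomy. To finish the lemma, assume for contradiction that $\Bm_1\cap\Bm_2=\emptyset$; the dichotomy gives, say, $\B_1\subseteq\mathrm{int}(\B_2)\subseteq\B_2$, whence the nonempty apex set $\Ba_1\subseteq\B_1\subseteq\B_2$, so $\Ba_1\cap\B_2=\Ba_1\neq\emptyset$, contradicting the hypothesis.

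The main obstacle I anticipate is not the topology but the bookkeeping around \emph{degeneracy}: if the non-emptiness inequality holds with equality, a bead collapses to the segment joining its apexes (or to a single point), its interior is empty and the "homeomorphic to $S^2$" claim fails, so $\Bm_i$ must then be read as the whole segment. Such cases can nonetheless be dispatched by the same segment-crossing idea—if $\B_1$ is a segment with both endpoints outside $\B_2$ yet meeting $\B_2$, the segment crosses $\partial\B_2$ in a point of $\Bm_1\cap\Bm_2$—so I would present the full-dimensional case as the main argument and treat degenerate beads in a short separate remark.
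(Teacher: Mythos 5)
Your proof is correct, but it takes a genuinely different route from the paper's. The paper argues with the concrete ruled structure of the mantel: it first produces a point $p\in\Bm_1\cap\B_2$ by joining an apex of $\B_2$ (outside $\B_1$) to a common point (inside $\B_1$), and then slides from $p$ along the cone generator toward the apex of the corresponding half-bead; that generator segment lies entirely in $\Bm_1$, runs from a point of $\B_2$ to a point outside $\B_2$, and therefore crosses $\Bm_2$ in the desired point of $\Bm_1\cap\Bm_2$. You instead establish a general containment dichotomy for compact convex bodies --- if they meet but their boundaries do not, one lies in the other's interior --- via connectedness of $\Bm_1$ and the splitting of the complement of $\Bm_2$ into interior and exterior, and only then let the apex hypotheses rule out nesting. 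Your version is more abstract and strictly more general (nothing bead-specific is used beyond convexity and compactness, and it makes transparent that the hypotheses serve only to exclude containment), at the price of invoking connectedness of the boundary, which is why you rightly flag the degenerate case; note, though, that your separate treatment there is barely needed, since for a degenerate bead $\Bm_1=\B_1$ is still connected and the same dichotomy argument applies (case (c) is then immediately contradictory). The paper's argument, by contrast, needs no case split and exhibits an explicit witness on a generator segment through an apex, in keeping with the constructive spirit of the rest of the paper, but it leans on the specific shape of the mantel. Both proofs are sound.
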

\begin{proof}
From the assumptions, we
know   there is a point
$p_1$ in $\B_2$, e.g., an apex of $\B_2$,  that is not in $\B_1$. Also, there is a point $p_2$ that is in
$\B_2$ and in $\B_1$. The line segment bounded by $p_1$ and $p_2$
lies in $\B_2$,  since $\B_2$ is convex. The line segment cuts the mantel of $\B_1$
since $p_2$ is inside $\B_1$ and $p_1$ is not. Let $p$ be this point
where the segment bounded by $p_1$ and $p_2$ intersects $\Bm_1$.
This point lies either on the upper-half bead $\uB{1}$ or on the
bottom-half bead $\bB{1}$. Let $r$ be the apex of this half bead.
Since $p$ is inside $\B_2$ and $r$ is not, the line segment bounded
by $p$ and $r$ must cut $\Bm_2$ in a point $q$. This point lies of
course on $\Bm_2$ and on  $\Bm_1$ since the line segment bounded by
$p$ and $r$ is a part of  $\Bm_1$. Hence their mantels must have a
non-empty intersection if the beads have a non-empty intersection and
neither bead contains the apexes of the other. \qed\end{proof}

Now, we show that if $\B_1$ and $\B_2$ intersect and neither $\bf (I)$, nor $\bf (II)$ occur, then $\bf (III)$ occurs.

\begin{theorem}\label{cleancut}
If $\B_1\cap\B_2\neq\emptyset$,
$\tau\B_1 \cap\B_2 =\emptyset$, $\B_1 \cap\tau\B_2  = \emptyset$,
 $ \Br_1 \cap \Bm_2 =\emptyset$ and $\Br_2 \cap \Bm_1 = \emptyset, $
 then   $\ICs{\Cm{1}}{\Cm{2}}\subset\B_1\cap\B_2$ or
$\ICs{\Cp{1}}{\Cp{2}}\subset\B_1\cap\B_2.$
\end{theorem}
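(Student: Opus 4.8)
The plan is to use Lemma~\ref{mantelcut} as the entry point and then reduce the statement to a connectedness argument on the intersection of the relevant cone surfaces. Since the hypotheses $\B_1\cap\B_2\neq\emptyset$, $\Ba_1\cap\B_2=\emptyset$ and $\Ba_2\cap\B_1=\emptyset$ are all assumed here, Lemma~\ref{mantelcut} gives $\Bm_1\cap\Bm_2\neq\emptyset$; fix a point $w$ in it. The first observation I would record is that, by the two rim hypotheses, the \emph{whole} set $\Bm_1\cap\Bm_2$ avoids both rims: a point of $\Br_1$ lying on $\Bm_2$ is forbidden by $\Br_1\cap\Bm_2=\emptyset$, and symmetrically for $\Br_2$. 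Because the rim $\Br_i$ is exactly the curve along which the lower part of $\Bm_i$ (lying on $\Cm{i}$) and the upper part (lying on $\Cp{i}$) are glued --- topologically the equator separating the two hemispheres of the sphere-like mantel --- any connected subset of $\Bm_1\cap\Bm_2$ that misses $\Br_1$ and $\Br_2$ must lie entirely on a single cone of $\B_1$ and a single cone of $\B_2$. So every connected component of the mantel intersection is of one of four pure types.

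Next I would show that a same-type contact yields the conclusion cleanly. Suppose the component through $w$ lies on the two bottom cones, i.e.\ in $\Gamma:=\Cm{1}\cap\Cm{2}$ (the two-upper-cones case being symmetric and giving $\ICs{\Cp{1}}{\Cp{2}}$). Viewing the bottom cones as circles growing linearly in $t$, $\Gamma$ is, when $v_1\neq v_2$, a single closed loop: the circles first meet in one point (an external tangency, which is precisely $\ICs{\Cm{1}}{\Cm{2}}$), then in two points, and finally in one point (an internal tangency); hence the earliest-time point of $\Gamma$ is $\ICs{\Cm{1}}{\Cm{2}}$. Now set $S:=\Gamma\cap\B_1\cap\B_2$. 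This set is closed; it is nonempty (it contains $w$); and it is \emph{open} in $\Gamma$, because any $z\in S$ misses both rims (else $z\in\Br_1\cap\Cm{2}\cap\B_2=\Br_1\cap\Bm_2$, contradicting the hypothesis), so $z$ lies strictly inside the opposite solid cone of each bead and nearby points of $\Gamma$ stay in both beads. By connectedness of $\Gamma$ we get $S=\Gamma$, so the whole loop, and in particular its earliest-time point $\ICs{\Cm{1}}{\Cm{2}}$, lies in $\B_1\cap\B_2$. This is exactly case (III), and it is the rigorous content of the informal remark that once the rims do not cut the opposite mantels, ``the intersection of the mantels is the same as an intersection of cones'' and the initial contact ``is not cut out''.

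The step I expect to be the main obstacle is excluding the mixed case, where the component through $w$ lies on the bottom cone of one bead and the upper cone of the other, say in $\Cm{1}\cap\Cp{2}$. The same clopen argument then forces the full loop $\Cm{1}\cap\Cp{2}$ into $\B_1\cap\B_2$, but the extreme points of this loop are tangencies of a bottom cone with an upper cone, which are \emph{not} among the two initial contacts allowed in the conclusion. So I must prove that, under the hypotheses that neither (I) nor (II) holds, no purely mixed situation can occur; equivalently, that some component is same-type.

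My plan for this is to show that a full loop $\Cm{1}\cap\Cp{2}$ contained in the two convex beads forces an apex $\Ba_1$ into $\B_2$ (or $\Ba_2$ into $\B_1$), contradicting $\Ba_1\cap\B_2=\emptyset$. The mechanism I would try to make precise is that the earliest-time extreme point of the mixed loop is a tangency of the growing bottom cone $\Cm{1}$ with the shrinking upper cone $\Cp{2}$, and this point sits near the apex of $\Cm{1}$; worked small examples confirm that shrinking $\B_2$ just enough to expel its would-be-contained apex necessarily pushes this extreme point out of $\B_2$, so the loop cannot be wholly contained without the apex containment that the hypotheses forbid. (The equal-speed case $v_1=v_2$ is degenerate here: the same-type cone intersection is then unbounded, so the clopen argument shows such contacts cannot occur under the hypotheses, and one checks these beads meet only through (I) or (II).) Converting this near-apex/continuity observation into a clean containment statement --- or, alternatively, showing that a mixed loop forces a rim through the opposite mantel and hence contradicts (II) --- is the delicate point; once it is settled, the theorem follows from the same-type analysis above.
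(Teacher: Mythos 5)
Your decomposition of $\Bm_1\cap\Bm_2$ into rim-avoiding connected components of four pure cone-types, together with the clopen argument showing that a same-type component forces the whole loop $\Cm{1}\cap\Cm{2}$ (or $\Cp{1}\cap\Cp{2}$) into $\B_1\cap\B_2$ and hence captures the initial contact as its earliest-time point, is sound and arguably cleaner than the corresponding step in the paper. But the proof is not complete: excluding the purely mixed situation --- every component lying on, say, $\Cm{1}\cap\Cp{2}$, with no same-type component anywhere --- is exactly where the substance of the theorem sits, and you leave it at the level of a plan supported by ``worked small examples.'' That is a genuine gap, not a routine verification: the claim you would need (a mixed loop wholly contained in both beads forces an apex of one bead into the other, or forces a rim through the opposite mantel) is a nontrivial statement about the relative position of an upward-opening and a downward-opening cone, and nothing in your write-up establishes it. You correctly identify it as the obstacle, but identifying an obstacle is not the same as removing it.

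For comparison, the paper closes this hole by a different preliminary dichotomy: each rim $\Br_i$ is connected and by hypothesis misses $\Bm_j$, so either $\Br_i\subset\B_j$ or $\Br_i\cap\B_j=\emptyset$. In the first case (say $\Br_2\subset\B_1$) it deduces $v_2<v_1$, shows that every generator of $\Cm{2}$ crosses $\Cm{1}$ between the apex and $\Br_2$ so that $\Cm{1}\cap\Cm{2}\subset\bB{2}$, and then uses a rim-crossing argument to place that curve in $\bB{1}$ as well. In the second case it runs a ``what goes in must come out'' argument on the half-beads to show the contact must be between two bottom halves or two upper halves --- precisely the same-type situation your clopen argument already handles. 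Importing that dichotomy on the rims (or proving your apex-containment claim for a mixed loop directly) is what is still missing before your argument proves the theorem.
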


\begin{proof}
Let us assume that the hypotheses of the statement of the theorem is
true. It is sufficient to prove that either
$\Cm{1}\cap\Cm{2}\subset\bB{1} \cap \bB{2}$ or
$\Cp{1}\cap\Cp{2}\subset\uB{1}\cap \uB{2}$. We will split the proof
in two cases. From the fourth and fifth hypotheses it follows that
either (1) $\Br_1\subset\B_2$ or $\Br_2\subset\B_1$;  or (2)
$\Br_1\cap\B_2=\emptyset$ and $\Br_2\cap\B_1=\emptyset$.
\medskip

\par \noindent
{\bf Case (1)}: We assume $\Br_2\subset\B_1$ (the case
$\Br_1\subset\B_2$ is completely analogous). We prove
$\Cm{1}\cap\Cm{2}\subset\bB{1}\cap \bB{2}$ (the case for upper cones
is completely analogous). The following argument is  illustrated in Figure~\ref{proof}.

Since $\Br_2\subset\B_1$, we know that $\Br_2$ is inside $\Cm{1}$,
and $(t_3,x_3,y_3)$ is outside. We can show that $v_2<v_1$. Consider
the plane spanned by the two axis of symmetry of both $\Cm{1}$ and
$\Cm{2}$. Both $\Cm{1}$ and $\Cm{2}$ intersect this plane in two
half lines each. Moreover, we know that $\Cm{1}$ intersects the axis
of symmetry of $\Cm{2}$. Let $t_0$ be the moment at which this
happens. Obviously $t_0>t_1$, but we know also know $t_0>t_3$ since
$(t_3,x_3,y_3)$ is  outside $\Cm{1}$. We have that
$v_1(t_0-t_1)=\sqrt{(x_1-x_3)^2+(y_1-y_3)^2}$. Since $\Br_2$ is
inside $\Cm{1}$ and $(t_3,x_3,y_3)$ is outside, this means both half
lines from $\Cm{2}$ intersect the half lines from $\Cm{1}$. Let
$t'_0$ and $t''_0$ be the moments in time at which this happens and
let $t'_0>t''_0$. We have again that $t'_0>t_1$ and $t'_0>t_3$. Then
$v_1(t'_0-t_1)=\sqrt{(x_1-x_3)^2+(y_1-y_3)^2}+v_2(t'_0-t_3)$ if and
only if $v_1(t'_0-t_0)=v_2(t'_0-t_3)$. Since $t_0>t_3$, we get
$v_2<v_1$. This is depicted in Figure~\ref{proof}.

\begin{figure}[h]
\centering
\input{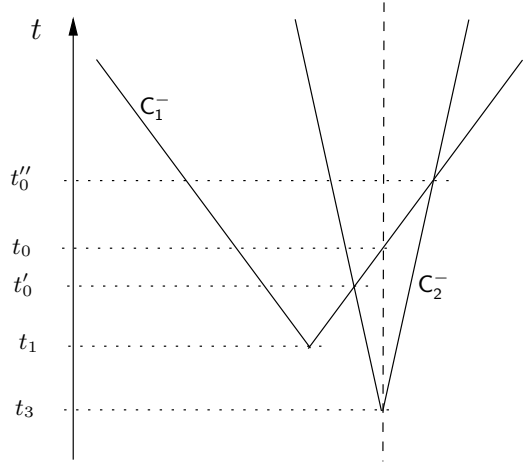}
\caption{Illustration to the proof.}\label{proof}
\end{figure}

It follows that every straight half line starting in $(t_3,x_3,y_3)$
on $\Cm{2}$ intersects $\Cm{1}$ between $(t_3,x_3,y_3)$ and $\Br_2$,
since $\Br_2$ is inside $\Cm{1}$, and $(t_3,x_3,y_3)$ is outside. We
also know that this line does not intersect $\Cm{1}$ beyond $\Br_2$
since the cone $\Cm{2}$ is entirely inside $\Cm{1}$ beyond the rim
$\Br_2$. Therefore, $\Cm{1}\cap\Cm{2}\subset\bB{2}$.

Clearly, $\bB{2}$ intersects $\bB{1}$ since it can not intersect
$\uB{1}$.  We know $\Cm{1}\cap\partial\bB{2}$ is a closed
continuous curve that lies entirely in $\Cm{1}$. This curve is also
contained in $\bB{1}$. Indeed, if we assume this is not the case,
then it intersects the plane in which $\Br_1$ lies, and hence it
intersects $\Br_1$ itself, contradicting  the assumption  $\Br_1
\cap \Bm_2=\emptyset$.
\medskip

\par \noindent
{\bf Case (2)}: Now assume $\Br_1\cap\B_2=\emptyset$ and $\Br_2\cap\B_1=\emptyset$. Clearly,
$v_1$ can not be equal to $v_2$, otherwise the depicted intersection
can not occur.   So suppose without loss of
generality that $v_2<v_1$. Now either $\bB{2}$ intersects both
$\bB{1}$ and $\uB{1}$ or $\uB{2}$ intersects both $\bB{1}$ and
$\uB{1}$. These cases are mutually exclusive because of the
following. If $\uB{2}$ intersects $\uB{1}$ then $\Br_2$ is inside
$\Cp{1}$, likewise if $\bB{2}$ intersects $\bB{1}$ then $\Br_2$ is
inside $\Cm{1}$. Hence $\Br_2\subset \B_1$ which contradicts our
hypothesis. If $\uB{2}$ intersects $\bB{1}$ then $\Br_2$ must be
outside $\Cm{1}$ and thus $\bB{2}$ must be as well, hence $\bB{2}$
intersects neither $\bB{1}$ nor $\uB{1}$. Likewise, if $\bB{2}$
intersects $\uB{1}$ then $\uB{2}$ can not intersect $\B_1$.

To prove that if $\bB{2}$ intersects $\bB{1}$ then it also
intersects $\uB{1}$ and if $\bB{2}$ intersects $\uB{1}$ then it also
intersects $\bB{1}$ we proceed as follows (the case for $\uB{2}$ is
analogous). Suppose $\bB{2}$ intersects $\bB{1}$, then
$\bB{2}\cap\bB{1}\subset\B_1$, but $\Br_2$ is outside $\B_1$, that
means $\bB{2}$ must intersect $\uB{1}$ since it can not intersect
$\bB{1}$ anymore. This is the ``what goes in must come
out''-principle. Likewise, suppose $\bB{2}$ intersects $\uB{1}$, then
$\bB{2}\cap\uB{1}\subset\B_1$, but $(t_3,x_3,y_3)$ is outside
$\B_1$, that means $\bB{2}$ must intersect $\bB{1}$ since it can not
intersect $\bB{1}$ anymore.

So suppose now that $\bB{2}$ intersects both $\bB{1}$ and $\uB{1}$ (the case for $\uB{2}$ is completely analogous). If $\bB{2}$
intersects $\bB{1}$ that means $\Br_2$ is completely inside $\Cm{1}$
and therefore that $\Cm{1}\cap \Cm{2} \subset \bB{2}$. We proceed
like in the first case, we know that $\Cm{1}\cap\bB{2}$ is a closed
continuous curve. This curve lies entirely in $\Cm{1}$. If this
curve is not entirely in $\bB{1}$ that means it intersects the plane
in which $\Br_1$ lies, and hence intersects $\Br_1$ itself. But this
is contradictory to the assumption that $\Br_1 \cap
\Bm_2=\emptyset$.  \qed\end{proof}

\begin{figure}[h]
\centering
\epsfig{file=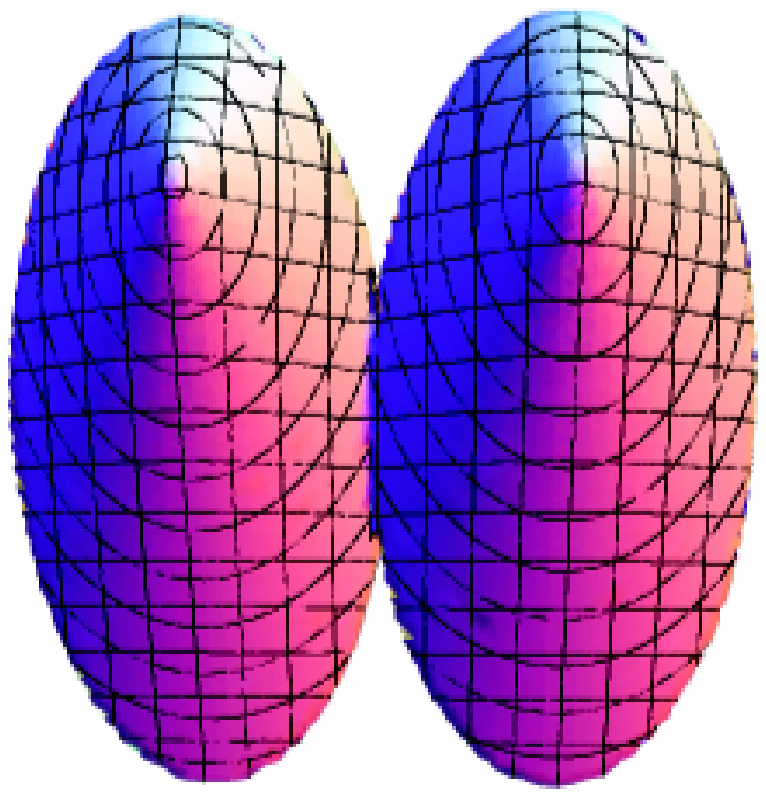, height=3in}
\\
\epsfig{file=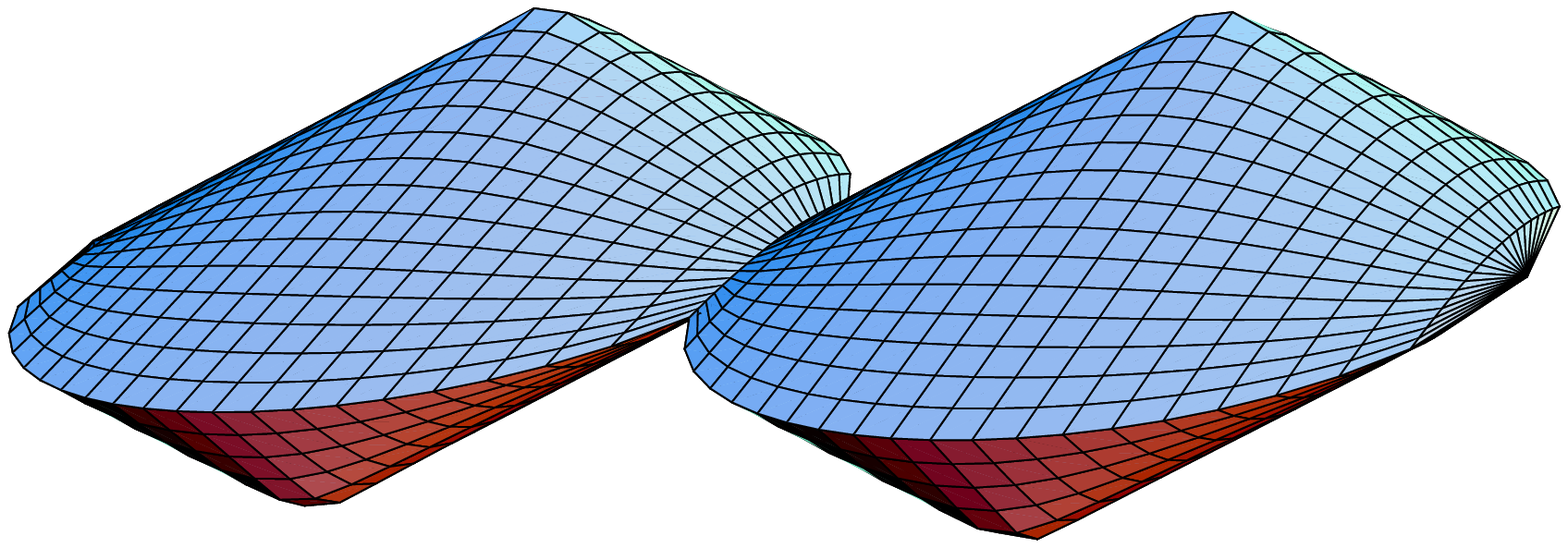, width=5in, height=2in}
\caption{Case \textbf{(II)} is not redundant: $\bead{0}{0}{0}{2}{0}{2}{1.9}$ and $\bead{0}{3}{0}{2}{3}{2}{1.9}$ seen from the top and the side.}\label{counterex}
\end{figure}

In Theorem~\ref{cleancut}, we proved that if there is an intersection
and neither rim cuts the other bead's mantel and neither apex of a
bead is contained in the other then there must be an initial contact
in the intersection. Visualizing how beads intersect might tempt one
to think there is always an initial contact in the intersection.
There exist counterexamples in which there is an intersection and no initial contact is in that intersection.
That means case \textbf{(II)} is not redundant. This situation is depicted in Figure~\ref{counterex}.
The beads are $\bead{0}{0}{0}{2}{0}{2}{1.9}$ and $\bead{0}{3}{0}{2}{3}{2}{1.9}$.

\begin{figure}[h]
\centering
\epsfig{file=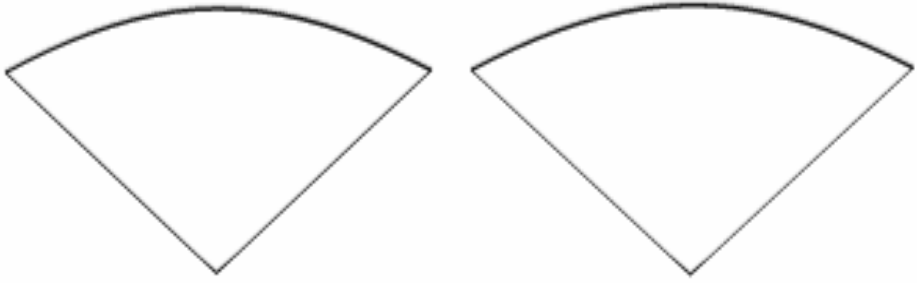, width=5in, height=1.5in}
\caption{Intersection of Figure~\ref{counterex} with the plane $y=0$.}\label{counterex1}
\end{figure}

It is clear that the initial contact of the bottom cones lies in the plane spanned by the axis of symmetry of those bottom cones, in this case this is the plane $y=0$. The intersection of Figure~\ref{counterex} can be seen in Figure~\ref{counterex1}, where the two beads clearly have no intersection and thus no initial contact in the intersection.

\begin{figure}[h]
\centering
\epsfig{file=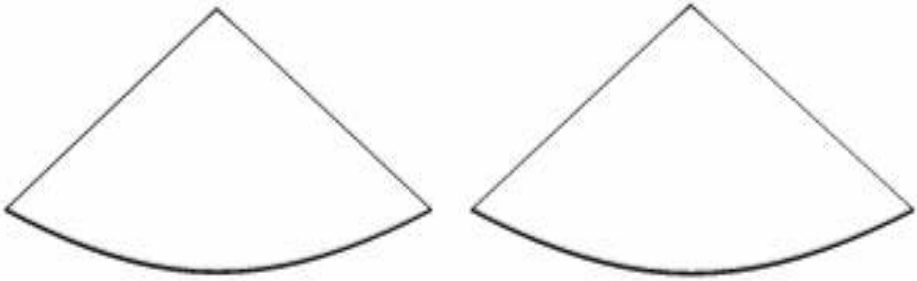, width=5in, height=1.5in}
\caption{Intersection of Figure~\ref{counterex} with the plane $y=3$.}\label{counterex2}
\end{figure}

In the case of the upper cones the initial contact must lie in the plane $y=3$. The intersection of Figure~\ref{counterex} can be seen in Figure~\ref{counterex2}, where the two beads clearly have no intersection and there is again no initial contact in the intersection.

 This concludes the outline.

\subsection{A formula for Case (I)}\label{subsec:caseI}

In Case \textbf{(I)}, we verify whether $\tau\B_1 \cap\B_2 \not=\emptyset$ or $\B_1
\cap\tau\B_2 \not=\emptyset$. To check if that is the case we merely
need to verify if one of the apexes satisfies the set of equations
of the other bead. In this way we obtain
$$\displaylines{\Phi_{\textbf{I}}\left(t_1,x_1,y_1,t_2,x_2,y_2,v_1,t_3,x_3,y_3,t_4,x_4,y_4,v_2\right)
:=\hfill{}\cr \hfill{}\left(
\Psi_{\B}\left(t_3,x_3,y_3,t_1,x_1,y_1,t_2,x_2,y_2,v_1\right)\lor
\Psi_{\B}\left(t_4,x_4,y_4,t_1,x_1,y_1,t_2,x_2,y_2,v_1\right)\right.\lor\cr\hfill{}
\left.\Psi_{\B}\left(t_1,x_1,y_1,t_3,x_3,y_3,t_4,x_4,y_4,v_2\right)\lor
\Psi_{\B}\left(t_2,x_2,y_2,t_3,x_3,y_3,t_4,x_4,y_4,v_2\right)\right).}$$

For the following sections we assume that the apex sets of the beads
are not singletons, i.e., $t_1<t_2$ and $t_3<t_4$.

\subsection{A formula for Case (II)}\label{subsec:caseII}

Now, let us assume that $\Phi_{\textbf{I}}$ failed in the previous section.
Note that we can always apply a speed-preserving~\cite{ons-icdt}
transformation to $\R\times\R^2$ to obtain easier coordinates. We
can always find a transformation such that
$(t'_1,x'_1,y'_1)=(0,0,0)$ and that the line-segment connecting
$(t'_1,x'_1,y'_1)$ and $(t'_2,x'_2,y'_2)$ is perpendicular to the
$y$-axis, i.e., $y'_2=0$.  This transformation is a composition of a
translation in $\R\times\R^2$, a spatial rotation in $\R^2$ and a
scaling in $\R\times\R^2$~\cite{ons-icdt}. Let the coordinates
without a prime be the original set, and let coordinates with a
prime be the image of the same coordinates without a prime under
this transformation. Note that we do not need to transform back
because the query is invariant under such
transformations~\cite{ons-icdt}. The following formula returns the
transformed coordinates $(t',x',y')$ of $(t,x,y)$ given the points
$(t_1,x_1,y_1)$ and $(t_2,x_2,y_2)$:
$$\displaylines{\varphi_{A}(t_1,x_1,y_1,t_2,x_2,y_2,t,x,y,t',x',y'):=\left(y_2\neq y_1\ \land
\right.\hfill{}\cr\hfill{}
 \ t'=(t-t_1)\sqrt{(x_2-x_1)^2+(y_2-y_1)^2}\
\land \ x'=(x-x_1)(x_2-x_1)\cr\hfill{}\left.+(y-y_1)(y_2-y_1)\ \land
\ y'=(x-x_1)(y_1-y_2)+(y-y_1)(x_2-x_1)\right) \cr\hfill{}\lor\
\left(y_2= y_1\ \land \ t'=(t-t_1)\ \land \ x'=(x-x_1)\ \land \
y'=(y-y_1)\right). }$$

The translation is over the vector $(-t_1,-x_1,-y_1)$, the rotation
over minus the angle that $(t_2-t_1,x_2-x_1,y_2-y_1)$ makes with the
$x$-axis, and a scaling by a factor
$\sqrt{(x_2-x_1)^2+(y_2-y_1)^2}$. Notice that the rotation and
scaling only need to occur if $y_2$ is not already in place, i.e., if
$y_2\neq y_1$.

The formula $\psi_{crd}(t'_1,\ab x'_1,\ab y'_1,\ab t_1,\ab x_1,\ab
y_1,\ab t'_2,\ab x'_2,\ab y'_2,\ab t_2,\ab x_2,\ab y_2,\ab t'_3,\ab
x'_3,\ab y'_3,\ab t_3,\ab x_3,\ab y_3,\ab t'_4,\ab x'_4,\ab y'_4,\ab
t_4,\ab x_4,\ab y_4)$ is short for $ \varphi_{A}(t_1,\ab x_1,\ab
y_1,\ab t_2,\ab x_2,\ab y_2,\ab t_1,\ab x_1,\ab y_1,\ab t'_1,\ab
x'_1,\ab y'_1)\ \land \ \varphi_{A}(t_1,\ab x_1,\ab y_1,\ab t_2,\ab
x_2,\ab y_2,\ab t_2,\ab x_2,\ab y_2,\ab t'_2,\ab x'_2,\ab y'_2)\
\land \ \varphi_{A}(t_1,\ab x_1,\ab y_1,\ab t_2,\ab x_2,\ab y_2,\ab
t_3,\ab x_3,\ab y_3,\ab t'_3,\ab x'_3,\ab y'_3)\ \land \
\varphi_{A}(t_1,\ab x_1,\ab y_1,\ab t_2,\ab x_2,\ab y_2,\ab t_4,\ab
x_4,\ab y_4,\ab t'_4,\ab x'_4,\ab y'_4)$.

 This transformation yields some
simple equations for the rim $\Br_1$:
$$\Br_1\leftrightarrow\left\{
\begin{array}{l}
x^2+y^2 = t^2v_1^2\\
2x(-x'_2)+ x'^2_2 = v_1^2(2t(-t'_2)+t'^2_2)\\
0\leq t\leq t'_2\ .
\end{array}
\right. $$

Not only that, but with these equations we can deduce a simple
parametrization in the $x$-coordinate for the rim,
$$\Br_1\leftrightarrow\left\{
\begin{array}{l}
t=\frac{2xx'_2-x'^2_2+v_1^2t'^2_2}{2v_1^2t'_2}\\
y=\pm\sqrt{v_1^2\left(\frac{2xx'_2-x'^2_2+v_1^2t'^2_2}{2v_1^2t'_2}\right)^2-x^2}\\
0\leq t\leq t'_2\ .\end{array} \right.$$

We remark that this implies $t'_2\neq 0$ and $v_1\neq 0$.  If $t'_2 = 0$,
then $\B_1$ is a point, hence degenerate. If $v_1 = 0$, then $\B_1$
is a line segment, and again degenerate. Next we will inject these
parameterizations in the constraints for $\partial\uB{2}$ and
$\partial\bB{2}$ separately. The constraints for $\partial\bB{2}$
are
$$\left\{
\begin{array}{l}
(x-x'_3)^2+(y-y'_3)^2 = (t-t'_3)^2v_2^2\\
2x(x'_3-x'_4)+x'^2_4-x'^2_3+2y(y'_3-y'_4)+y'^2_4-y'^2_3\leq v_2^2\left(2t(t'_3-t'_4)+t'^2_4-t'^2_3\right)\\
t'_3\leq t\leq t'_4 \ .
\end{array}
\right. $$

We will explain how to proceed to compute the intersection with
$\partial\bB{2}$ and simply reuse formulas for intersection with
$\partial\uB{2}$. First, we insert our expressions for $x$ and $y$ in
the first equation. This is equivalent to computing intersections of
$\Br_1$ with $\Cm{2}$ and gives
$$\displaylines{\left(x-x'_3\right)^2+
\left(\pm\sqrt{v_1^2\left(\frac{2xx'_2-x'^2_2+v_1^2t'^2_2}{2v_1^2t'_2}\right)^2-x^2}-y'_3\right)^2
\cr =
\left(\frac{2xx'_2-x'^2_2+v_1^2t'^2_2}{2v_1^2t'_2}-t'_3\right)^2v_2^2,}$$

or equivalently

$$\displaylines{
\pm2y'_3\sqrt{v_1^2\left(2xx'_2-x'^2_2+v_1^2t'^2_2\right)^2-\left(2v_1^2t'_2\right)^2x^2}
 =\cr
\left(2xx'_2-x'^2_2+v_1^2t'^2_2-\left(2v_1^2t'_2\right)t'_3\right)^2v_2^2-\left(2v_1^2t'_2\right)^2(x-x'_3)^2\cr
-\left(2v_1^2t'_2\right)^2y'^2_3-
\left(v_1^2\left(2xx'_2-x'^2_2+v_1^2t'^2_2\right)^2-\left(2v_1^2t'_2\right)^2x^2\right)}$$

or equivalently

$$\displaylines{
\pm v_12y'_3\sqrt{x^24\left(x'^2_2-v_1^2t'^2_2\right)
+x4x'^2_2\left(v_1^2t'^2_2-x'^2_2\right)+\left(v_1^2t'^2_2-x'^2_2\right)^2}
 \cr=x^24x'^2_2\left(v_2^2-v_1^2\right)+
 \cr x4\left(-x'^2_2v_2^2\left(-x'^2_2+v_1^2t'^2_2-4v_1^4t'^2_2t'_3\right)
 +2v_1^4t'^2_2x'_3+v_1^2x'_2\left(v_1^2t'^2_2-x'^2_2\right)\right)\cr+
 \left(v_2^2\left(-x'^2_2+v_1^2t'^2_2-4v_1^4t'^2_2t'_3\right)^2-4v_1^4t'^2_2\left(x'^2_3+y'^2_3\right)
 -v_1^4\left(-x'^2_2+v_1^2t'^2_2\right)\right)\ .
}$$

By squaring left and right hand in this last expression, we rid ourselves
of the square root and obtain the following polynomial equation of
degree four. Squaring may create new solutions, so to ensure we only get useful solutions, we have to add the condition that the square root exists. This is the case if and only if
$$\phi_{}\sqrt{\ }(x,t'_2,x'_2,v_1):=x^24\left(x'^2_2-v_1^2t'^2_2\right)
+x4x'^2_2\left(v_1^2t'^2_2-x'^2_2\right)+\left(v_1^2t'^2_2-x'^2_2\right)^2\geq 0$$
is satisfied.

We notice that if $\B_1$ is degenerate, i.e., $x'^2_2=v_1^2t'^2_2$, then the square root vanishes and the polynomial in
$\phi_4$ is the square of a polynomial of degree two, yielding to at
most two roots and intersection points as we expect. The case were $v_1=0$ is captured by the formula in the next section, that is why we leave that case out here and demand that $v_1\neq 0$. So the
following still works if one or both beads is degenerate:
$$\displaylines{\phi_4(x,t'_2,x'_2,v_1,t'_3,x'_3,y'_3,v_2):=\exists
a\exists b\exists c\exists d\exists e \left( ax^4+bx^3+cx^2+dx+e=0
\right.
\hfill{}\cr\hfill{} \land \
a=\left(4x'^2_2\left(v_2^2-v_1^2\right)\right)^2 \ \land \
b=-32x'^4_2v_2^2\left(v_2^2-v_1^2\right)\left(-x'^2_2+v_1^2t'^2_2-4v_1^4t'^2_2t'_3
 \right.\cr\hfill{}\left.+2v_1^4t'^2_2x'_3+v_1^2x'_2\left(v_1^2t'^2_2-x'^2_2\right)\right)\
 \land \ c=8\left(x'^2_2-v_1^2t'^2_2\right)
\left(-4v_1^4t'^2_2\left(x'^2_3+y'^2_3\right)+
\right.\cr\hfill{}\left.+v_2^2\left(-x'^2_2+v_1^2t'^2_2-4v_1^4t'^2_2t'_3\right)^2
 -v_1^4\left(-x'^2_2+v_1^2t'^2_2\right)\right)+
 \left(2v_1y'_3\right)^2\left(x'^2_2-v_1^2t'^2_2\right)
 \cr\hfill{}+
 \left(4\left(-x'^2_2v_2^2\left(-x'^2_2+v_1^2t'^2_2-4v_1^4t'^2_2t'_3\right)
 +2v_1^4t'^2_2x'_3+v_1^2x'_2\left(v_1^2t'^2_2-x'^2_2\right)\right)\right)^2
\cr\hfill{} \land \
d=8\left(-x'^2_2v_2^2\left(-x'^2_2+v_1^2t'^2_2-4v_1^4t'^2_2t'_3\right)
 +2v_1^4t'^2_2x'_3+v_1^2x'_2\left(v_1^2t'^2_2-x'^2_2\right)\right)\cr\hfill{}
 \left(v_2^2\left(-x'^2_2+v_1^2t'^2_2-4v_1^4t'^2_2t'_3\right)^2-4v_1^4t'^2_2\left(x'^2_3+y'^2_3\right)
 -v_1^4\left(-x'^2_2+v_1^2t'^2_2\right)\right)\cr\hfill{}+
\left(2v_1y'_3\right)^2\left(4x'^2_2\left(v_1^2t'^2_2-x'^2_2\right)\right)
\ \land \ e=
\left(2v_1y'_3\right)^2\left(v_1^2t'^2_2-x'^2_2\right)^2+\cr\hfill{}
\left.
\left(v_2^2\left(-x'^2_2+v_1^2t'^2_2-4v_1^4t'^2_2t'_3\right)^2-4v_1^4t'^2_2\left(x'^2_3+y'^2_3\right)
 -v_1^4\left(-x'^2_2+v_1^2t'^2_2\right)\right)^2\right) .}$$
The quantifiers we introduced here are only in place for esthetical
considerations and can be eliminated by direct substitution.

We note that if $v_1=v_2$, we get polynomials of degree   merely two. This can be solved
in an exact manner using nested square roots (or Maple if you will).
This gives us at most four values for $x$. Let
$$\phi_{roots}(x_a,x_b,x_c,x_d,t'_2,x'_2,v_1,t'_3,x'_3,y'_3,v_2)$$
be a formula that returns all four real roots, if they exist, that
satisfy both $\phi_4(x,t'_2,x'_2,\ab v_1,t'_3,x'_3,y'_3,v_2)$ and $\phi_{}\sqrt{\ }(x,t'_2,x'_2,v_1)$.
We
substitute these values in the parameter equations of $\Br_1$. By
substituting these in the last equation above, we can determine the
sign of the square root we need to take for $y$. A point $(t,x,y)$
satisfies the following formula is a point on $\Br_1$, but instead
of using the square root for $y$, we use an expression from above to
get the correct sign for the square root if $y'_3\neq 0$. If $y'_3=0$ we have to use the square root expression and then it does not matter which sign the square root has; we need both:
$$\displaylines{\psi_{\rho}(t,x,y,t'_2,x'_2,v_1,t'_3,x'_3,y'_3,v_2):=\left(y'_3\neq0\ \land \
t\left(2v_1^2t'_2\right)=2xx'_2-x'^2_2+v_1^2t'^2_2\
\land\right.\hfill{}\cr\hfill{} 2y'_3\left(2v_1^2t'_2\right)y=
\left(2xx'_2-x'^2_2+v_1^2t'^2_2-\left(2v_1^2t'_2\right)t'_3\right)^2v_2^2-\left(2v_1^2t'_2\right)^2(x-x'_3)^2\cr\hfill{}
\left.-\left(2v_1^2t'_2\right)^2y'^2_3-
\left(v_1^2\left(2xx'_2-x'^2_2+v_1^2t'^2_2\right)^2-\left(2v_1^2t'_2\right)^2x^2\right)\
\land \ \ 0\leq t\leq t'_2\ \right)\cr\hfill{}\lor
\left(y'_3=0\ \land \
t\left(2v_1^2t'_2\right)=2xx'_2-x'^2_2+v_1^2t'^2_2\ \land \ 0\leq t\leq t'_2\
\land\right.\cr\hfill{} \left.\left(2v_1^2t'_2\right)^2y^2=
\left(2xx'_2-x'^2_2+v_1^2t'^2_2\right)^2-\left(2v_1^2t'_2\right)^2x^2
\right).}$$

The four roots give us four spatio-temporal points on
$\Br_1\cap\Cm{2}$. In order for these points $(t,x,y)$ to be in
$\Br_1\cap\partial\bB{2}$, they need to satisfy
$$\displaylines{\psi_-(t,x,y,t'_3,x'_3,y'_3,t'_4,x'_4,y'_4,v_2):=
\hfill{}\cr\hfill{}
2x(x'_3-x'_4)+x'^2_4-x'^2_3+2y(y'_3-y'_4)+y'^2_4-y'^2_3\leq
v_2^2\left(2t(t'_3-t'_4)+t'^2_4-t'^2_3\right)\ .}$$ This formula
returns {\sc True} if $(t,x,y)$ lies in the same half space as the
bottom-half bead.

\begin{figure}[h]
\centering
\input{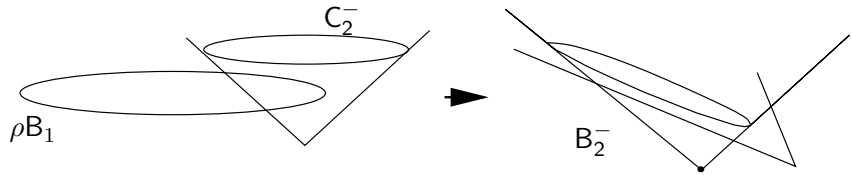}
\caption{The rim intersects the cone and solutions are verified in a half-space.}\label{rimconehalfspace}
\end{figure}

 The formula $\psi_+$ returns {\sc True}  if $(t,x,y)$ lies in the
same half space as the upper-half bead, i.e.,
$\psi_+(t,x,y,t'_3,x'_3,y'_3,t'_4,x'_4,y'_4,v_2):=\psi_-(t,\ab x,\ab
y,\ab t'_4,\ab x'_4,\ab y'_4,\ab t'_3,\ab x'_3,\ab y'_3,\ab v_2)$.
By combining $\psi_{\rho}(t,x,y,t'_2,x'_2,v_1,t'_3,x'_3,y'_3,v_2)$
and $\psi_-(t,\ab x,\ab y,\ab \hat{t},\ab \hat{x},\ab \hat{y},\ab
\tilde{t},\ab \tilde{x},\ab \tilde{y},\ab v)$ we get a formula that
decides the emptyness of the intersection $\Br_1\cap\partial\bB{2}$ in terms of  a parameter $x$:
$$\displaylines{\psi_{\rho\cap\partial^\pm}(x,t'_2,x'_2,v_1,t'_3,x'_3,y'_3,v_2,\hat{t},\hat{x},\hat{y},
\tilde{t},\tilde{x},\tilde{y},v):=\hfill{}\cr\hfill{}\exists y  \
\left(y'_3=0\ \land \  y^2\left(2v_1^2t'_2\right)^2=\left(2xx'_2-x'^2_2+v_1^2t'^2_2\right)^2v_1^2-\left(2v_1^2t'_2\right)^2x^2\right)\ \lor \
\cr\hfill{}
\left(y'_3\neq0\ \land \ 2y'_3\left(2v_1^2t'_2\right)y=
\left(2xx'_2-x'^2_2+v_1^2t'^2_2-\left(2v_1^2t'_2\right)t'_3\right)^2v_2^2-\left(2v_1^2t'_2\right)^2\right.
\cr\hfill{}
(x-x'_3)^2\left.-\left(2v_1^2t'_2\right)^2y'^2_3-
\left(v_1^2\left(2xx'_2-x'^2_2+v_1^2t'^2_2\right)^2-\left(2v_1^2t'_2\right)^2x^2\right)\right)
\land\ \cr\hfill{}
\left(2x(\hat{x}-\tilde{x})+\tilde{x}^2-\hat{x}^2+2y
(\hat{y}-\tilde{y})
+\tilde{y}^2-\hat{y}^2\right)\left(2v_1^2t'_2\right)
\leq\cr\hfill{}
v^2\left(2\left(2v_1^2t'_2\right)\left(2xx'_2-x'^2_2+v_1^2t'^2_2\right)
(\hat{t}-\tilde{t})+\left(2v_1^2t'_2\right)\left(\tilde{t}^2-\hat{t}^2\right)\right)\
\land \ 0\leq t'_2\cr\hfill{}
\left(2xx'_2-x'^2_2+v_1^2t'^2_2\right)\leq 2v_1^2t'^3_2\ \land \
\left(\hat{t}\left(2v_1^2t'^2_2\right)\leq
t'_2\left(2xx'_2-x'^2_2+v_1^2t'^2_2\right)\leq
\tilde{t}\left(2v_1^2t'^2_2\right)\right. \cr\hfill{}\lor\
\left.\tilde{t}\left(2v_1^2t'^2_2\right)\leq
t'_2\left(2xx'_2-x'^2_2+v_1^2t'^2_2\right)\leq
\hat{t}\left(2v_1^2t'^2_2\right)\right)\ .}$$

We are ready now to construct the formula that decides if $\Br_1$
and $\bB{2}$ have a non-empty intersection:
$$\displaylines{\varphi_{\rho_1\cap\partial_2^-}(t'_2,x'_2,v_1,t'_3,x'_3,y'_3,t'_4,x'_4,y'_4,v_2):=
\exists x \exists x_a\exists x_b\exists x_c\exists x_d\ \left(\right.
\hfill{}\cr
\phi_{roots}(x_a,x_b,x_c,x_d,t'_2,x'_2,v_1,t'_3,x'_3,y'_3,v_2)\
\land \  \left(x=x_a\lor x=x_b\lor x=x_c\lor
x=x_d\right)\cr\hfill{}
\left.
\land \
\psi_{\rho\cap\partial^\pm}(x,t'_2,x'_2,v_1,t'_3,x'_3,y'_3,v_2,t'_3,x'_3,y'_3,t'_4,x'_4,y'_4,v'_2)\
\right)
.}$$

The formula that decides if $\Br_1$ intersects $\partial\uB{2}$
looks strikingly similar:
$$\displaylines{\varphi_{\rho_1\cap\partial_2^+}(t'_2,x'_2,v_1,t'_3,x'_3,y'_3,t'_4,x'_4,y'_4,v_2):=
\exists x \exists x_a\exists x_b\exists x_c\exists x_d\
\left(
\right.
\hfill{}\cr
\phi_{roots}(x_a,x_b,x_c,x_d,t'_2,x'_2,v_1,t'_4,x'_4,y'_4,v_2)\
\land \  \left(x=x_a\lor x=x_b\lor x=x_c\lor
x=x_d\right)\cr\hfill{}
\left.
\land \
\psi_{\rho\cap\partial^\pm}(x,t'_2,x'_2,v_1,t'_4,x'_4,y'_4,v_2,t'_4,x'_4,y'_4,t'_3,x'_3,y'_3,v'_2)\
\right)
.}$$

The quantifiers introduced here can also be eliminated in a
straightforward manner. Notice that $\phi_{roots}$ acts as a
function rather than a formula that inputs
$(t'_2,x'_2,v_1,t'_4,x'_4,y'_4,v_2)$ to construct a polynomial of
degree four and returns the four roots $(x_a,x_b,x_c,x_d)$, if they
exist, of that polynomial. The existential quantifier for the
variable $x$ is used to cycle through those roots to see if any of
them does the trick. Finally we are ready to present the formula for
Case \textbf{(II)}:
$$\displaylines{\Phi_{\textbf{II}}\left(t_1,x_1,y_1,t_2,x_2,y_2,v_1,t_3,x_3,y_3,t_4,x_4,y_4,v_2\right)
:=\left(v_1\neq0\ \land \ v_2 \neq  0\right)
\ \land\cr \lnot \
\Phi_{\textbf{I}}\left(t_1,x_1,y_1,t_2,x_2,y_2,v_1,t_3,x_3,y_3,t_4,x_4,y_4,v_2\right)
\land \cr \exists t'_1\exists x'_1\exists y'_1\exists t'_2\exists
x'_2\exists y'_2\exists t'_3\exists x'_3\exists y'_3\exists
t'_4\exists x'_4\exists  y'_4
\left(
\right.
\cr
\psi_{crd}(t'_1,x'_1,y'_1,t_1,x_1,y_1,t'_2,x'_2,y'_2,t_2,x_2,y_2,t'_3,x'_3,y'_3,t_3,x_3,y_3,t'_4,x'_4,y'_4,t_4,x_4,y_4)
\cr\land\
\left(\varphi_{\rho_1\cap\partial_2^-}(t'_2,x'_2,v_1,t'_3,x'_3,y'_3,t'_4,x'_4,y'_4,v'_2)\
\lor \right.\hfill{}\cr\hfill{}\left.
\varphi_{\rho_1\cap\partial_2^+}(t'_2,x'_2,v_1,t'_3,x'_3,y'_3,t'_4,x'_4,y'_4,v_2)\right)\
\lor\cr
\psi_{crd}(t'_3,x'_3,y'_3,t_3,x_3,y_3,t'_4,x'_4,y'_4,t_4,x_4,y_4,t'_1,x'_1,y'_1,t_1,x_1,y_1,t'_2,x'_2,y'_2,t_2,x_2,y_2)
\cr
\land\
\left(\varphi_{\rho_1\cap\partial_2^-}(t'_3,x'_3,v_2,t'_1,x'_1,y'_1,t'_2,x'_2,y'_2,v'_1)\
\lor \right.\hfill{}\cr\hfill{}
\left. \left.
\varphi_{\rho_1\cap\partial_2^+}(t'_3,x'_3,v_2,t'_1,x'_1,y'_1,t'_2,x'_2,y'_2,v'_1)\right)\right)
\
.}$$

The reader may notice that a lot of quantifiers have been introduced
in the formula above. These quantifiers are merely there to
introduce easier coordinates and can be straightforwardly computed
(and eliminated) by the formula $\psi_{crd}$ and hence the formula
$\varphi_{A}(t_1,x_1,y_1,t_2,x_2,y_2,t,x,y,t',x',y')$. The latter
actually acts like a function, parameterized by
$(t_1,x_1,y_1,t_2,x_2,y_2)$, that inputs $(t,x,y)$ and outputs
$(t',x',y')$.

\subsection{A formula for Case (III)}\label{subsec:caseIII}

Here, we assume that both $\varphi_{\textbf{I}}$ and
$\varphi_{\textbf{II}}$ fail. So, there is no apex contained in the
other bead and neither rim cuts the mantel of the other bead.

As we proved in Theorem~\ref{cleancut}, the intersection between two
half beads will reduce to the intersection between two cones and
that means there is an initial contact that is part of the
intersection. To verify if this is the case we compute the two
initial contacts and verify if they are effectively part of the
intersection.

Using the expression for the initial contact $\IC(\Cm{1},\Cm{2})$, we
computed in Section~\ref{subsec:initialcontact} we can construct a formula
that decides if it is part of $\bB{1}\cap\bB{2}$. We will recycle
the formulas $\psi_-$ from the previous section to construct an
expression without the need for extra variables. The following
formula that returns {\sc True} if $\IC(\Cm{1},\Cm{2})=(t_0,x_0,y_0)$
satisfies $\psi_-(t_0,x_0,y_0,t',x',y',\hat{t},\hat{x},\ab
\hat{y},v)$:
$$\displaylines{\phi_-(t_1,x_1,y_1,v_1,t_3,x_3,y_3,v_2,t',x',y',\hat{t},\hat{x},\hat{y},v):=\hfill{}\cr\hfill{}
2(x'-\hat{x})\left((x_1v_2+x_3v_1)\sqrt{(x_1-x_3)^2+(y_1-y_3)^2}+
v_1\left((t_3-t_1)v_2\right)(x_3-x_1)\right)\cr\hfill{}
+2(y'-\hat{y})\left((y_1v_2+y_3v_1)\sqrt{(x_1-x_3)^2+(y_1-y_3)^2}+
v_1\left((t_3-t_1)v_2\right)(y_3-y_1)\right)\cr\hfill{}
+\sqrt{(x_1-x_3)^2+(y_1-y_3)^2}(v_1+v_2)\left(\hat{x}^2-x'^2+\hat{y}^2-y'^2\right)\leq
v^2\left(\left(\hat{t}^2-t'^2\right)(v_1+v_2)\right.\cr\hfill{}
+\left.2\left(\sqrt{(x_1-x_3)^2+(y_1-y_3)^2}+t_1v_1+t_3v_2\right)
\left(t'-\hat{t}\right)\right)\sqrt{(x_1-x_3)^2+(y_1-y_3)^2}\ .}$$

The following formula expresses that the time coordinate $t_0$ of
$\IC(\Cm{1},\Cm{2})$ satisfies the constraints $t'\leq t \leq t''$
and $\hat{t}\leq t \leq \check{t}$:
$$\displaylines{\psi_t\left(t_1,x_1,y_1,v_1,t_3,x_3,y_3,v_2,t',t'',\hat{t},\check{t}\right):=
\hfill{}\cr\hfill{}
t'(v_1+v_2)\leq\sqrt{(x_1-x_3)^2+(y_1-y_3)^2}+t_1v_1+t_3v_2\leq
t''(v_1+v_2) \cr\hfill{}\land \
\hat{t}(v_1+v_2)\leq\sqrt{(x_1-x_3)^2+(y_1-y_3)^2}+t_1v_1+t_3v_2\leq
\check{t}(v_1+v_2)\ .}$$

Now, $\IC(\Cm{1},\Cm{2})\subset\bB{1}\cap\bB{2}$ if and only if $\psi_{{\sf
\IC}^-}(t_1,\ab x_1,\ab y_1,\ab t_2,\ab x_2,\ab y_2,\ab v_1,\ab t_3,\ab x_3,\ab y_3,\ab t_4,\ab x_4,\ab y_4,\ab \ab
v_2)$ where $\psi_{{\sf
\IC}^-}(t_1,\ab x_1,\ab y_1,\ab t_2,\ab x_2,\ab y_2,\ab v_1,\ab t_3,\ab x_3,\ab y_3,\ab t_4,\ab x_4,\ab y_4,\ab v_2):=
\psi_t(t_1,\ab x_1,\ab y_1,\ab v_1,\ab \ab t_3,\ab x_3,\ab y_3,\ab v_2,\ab t_1,\ab t_2,\ab t_3,\ab t_4)\ \land \
 \phi_-(t_1,\ab x_1,\ab y_1,\ab v_1,\ab t_3,\ab x_3,\ab y_3,\ab v_2,\ab t_1,\ab x_1,\ab y_1,\ab t_2,\ab x_2,\ab y_2,\ab v_1)\
\land \
\phi_-(t_1,\ab x_1,\ab y_1,\ab v_1,\ab t_3,\ab x_3,\ab y_3,\ab v_2,\ab t_3,\ab x_3,\ab y_3,\ab t_4,\ab x_4,\ab y_4,\ab v_2)$
and $\IC(\Cp{1},\Cp{2})\subset\uB{1}\cap\uB{2}$ if and only if $\psi_{{\sf
\IC}^+}(t_1,\ab x_1,\ab y_1,\ab t_2,\ab x_2,\ab y_2,\ab v_1,\ab
t_3,\ab x_3,\ab \ab y_3,\ab t_4,\ab x_4,\ab y_4,\ab v_2):=
\psi_t(t_2,\ab x_2,\ab y_2,\ab v_1,\ab t_4,\ab x_4,\ab y_4,\ab
v_2,\ab t_1,\ab t_2,\ab t_3,\ab t_4)\ \land \
 \phi_-(t_2,\ab x_2,\ab y_2,\ab v_1,\ab t_4,\ab \ab x_4,\ab y_4,
 \ab v_2,\ab t_2,\ab x_2,\ab y_2,\ab \ab t_1,\ab x_1,\ab y_1,\ab v_1)\
\land \ \phi_-(t_2,\ab x_2,\ab y_2,\ab v_1,\ab t_4,\ab x_4,\ab
y_4,\ab v_2,\ab t_4,\ab x_4,\ab y_4,\ab t_3,\ab x_3,\ab \ab y_3,\ab
v_2)$.

 The
formula that expresses the criterium for Case \textbf{(III)}  then looks as
follows:
$$\displaylines{\Phi_{\textbf{III}}\left(t_1,x_1,y_1,t_2,x_2,y_2,v_1,t_3,x_3,y_3,t_4,x_4,y_4,v_2\right)
:=\left(v_1+ v_2 \neq  0\right)
\ \land\cr \lnot \
\Phi_{\textbf{I}}\left(t_1,x_1,y_1,t_2,x_2,y_2,v_1,t_3,x_3,y_3,t_4,x_4,y_4,v_2\right) \ \land \  \cr\hfill{} \left(\psi_{{\sf
\IC}^-}(t_1,x_1,y_1,t_2,x_2,y_2,v_1,t_3,x_3,y_3,t_4,x_4,y_4,v_2)\
\lor  \right.\hfill{}\cr\hfill{} \left.\psi_{{\sf
\IC}^+}(t_1,x_1,y_1,t_2,x_2,y_2,v_1,t_3,x_3,y_3,t_4,x_4,y_4,v_2)\right)\
. \hfill{}}$$

\subsection{The formula for the parametric alibi query}\label{subsec:solution}
The final formula that
decides if two beads,
$\B_1=\bead{t_1}{x_1}{y_1}{t_2}{x_2}{y_2}{v_1}$ and
$\B_2=\bead{t_3}{x_3}{y_3}{t_4}{x_4}{y_4}{v_2}$, do not intersect
looks as follows
$$\displaylines{\psi_{alibi}\left(t_1,x_1,y_1,t_2,x_2,y_2,v_1,t_3,x_3,y_3,t_4,x_4,y_4,v_2\right):=
\lnot\left((t_1<t_2\ \land \ t_3<t_4)\ \land \hfill{}\right.
\cr\quad
\left(\Phi_{\textbf{III}}
\left(t_1,x_1,y_1,t_2,x_2,y_2,v_1,t_3,x_3,y_3,t_4,x_4,y_4,v_2\right)\right.
\hfill{}\cr\hfill{} \left.\lor\ \Phi_{\textbf{II}}
\left(t_1,x_1,y_1,t_2,x_2,y_2,v_1,t_3,x_3,y_3,t_4,x_4,y_4,v_2\right)\right)\hfill{}\cr\hfill{}\left.
\lor \
\Phi_{\textbf{I}}\left(t_1,x_1,y_1,t_2,x_2,y_2,v_1,t_3,x_3,y_3,t_4,x_4,y_4,v_2\right)\right).\quad }$$

\section{Experiments}\label{sec:experiments}
In this section, we compare our solutionto the alibi query (using the formula given in Section~\ref{subsec:solution}) with the method of eliminating quantifiers of Mathematica.

In the following table it is clear that traditional quantifier elimination performs badly on the example beads. Its running times highly deviates from their average  and range in the minutes. Whereas the method described in this paper performs in running times that consistently only needs milliseconds or less. This shows our method is efficient and our claim, that it runs in milliseconds or less, holds.

For this first set of beads we chose to verify intersection of two oblique beads (1-2) and the intersection of one oblique and one straight bead (3-4). The beads that actually intersected had a remarkable low running time with the QE-method.

\medskip

\begin{tabular}{|l||c|c||c|c|}
\hline
 &\multicolumn{2}{c|}{The beads}&\multicolumn{2}{c|}{The running times}\\
\cline{2-5}
 &$\B_1$&$\B_2$&QE&Our Method\\
\hline\hline
1&$\ (0,0,0,2,0,2,1.9)\ $&$\ (0,3,0,2,3,2,2)\ $&$\ 0.656$ Seconds &$\ 0.016$ Seconds\\
2&$\ (0,0,0,2,0,2,1.9)\ $&$\ (0,4,0,2,4,2,2)\ $&$\ 324.453 $ Seconds &$\ 0.063$ Seconds\\
3&$\ (0,0,0,2,0,2,1.9)\ $&$\ (0,3,0,2,3,0,2)\ $&$\ 0.438$ Seconds &$\ 0.015$ Seconds\\
4&$\ (0,0,0,2,0,2,1.9)\ $&$\ (0,4,0,2,4,0,2)\ $&$\ 475.719$ Seconds &$\ 0.031$ Seconds\\
\hline
\end{tabular}

\medskip

The type of beads in this second set are as in the first. However, these beads all have overlapping time intervals unlike the first set, where the time intervals coincided.

\medskip

\begin{tabular}{|l||c|c||c|c|}
\hline
 &\multicolumn{2}{c|}{The beads}&\multicolumn{2}{c|}{The running times}\\
\cline{2-5}
 &$\B_1$&$\B_2$&QE&Our Method\\
\hline\hline
1&$\ (0,0,0,2,0,2,1.9)\ $&$\ (1,3,0,3,3,2,2)\ $&$\ 63.375$ Seconds &$\ 0.078$ Seconds\\
2&$\ (0,0,0,2,0,2,1.9)\ $&$\ (1,4,0,3,4,2,2)\ $&$\ 59.485$ Seconds &$\ 0.078$ Seconds\\
3&$\ (0,0,0,2,0,2,1.9)\ $&$\ (1,3,0,3,3,0,2)\ $&$\ 29.734$ Seconds &$\ 0.031$ Seconds\\
4&$\ (0,0,0,2,0,2,1.9)\ $&$\ (1,4,0,3,4,0,2)\ $&$\ 27.281$ Seconds &$\ 0.032$ Seconds\\
\hline
\end{tabular}

\medskip

The type of beads in this third set are as in the first. But this time the time intervals are completely disjoint. Note that the running times for the QE-method are more consistent in this set and the previous one.

\medskip

\begin{tabular}{|l||c|c||c|c|}
\hline
 &\multicolumn{2}{c|}{The beads}&\multicolumn{2}{c|}{The running times}\\
\cline{2-5}
 &$\B_1$&$\B_2$&QE&Our Method\\
\hline\hline
1&$\ (0,0,0,2,0,2,1.9)\ $&$\ (3,3,0,4,3,2,2)\ $&$\ 63.641$ Seconds &$\ 0.046$ Seconds\\
2&$\ (0,0,0,2,0,2,1.9)\ $&$\ (3,4,0,4,4,2,2)\ $&$\ 61.781$ Seconds &$\ 0.016$ Seconds\\
3&$\ (0,0,0,2,0,2,1.9)\ $&$\ (3,3,0,4,3,0,2)\ $&$\ 52.735$ Seconds &$\ 0.031$ Seconds\\
4&$\ (0,0,0,2,0,2,1.9)\ $&$\ (3,4,0,4,4,0,2)\ $&$\ 56.875$ Seconds &$\ 0.046$ Seconds\\
\hline
\end{tabular}

\section{The alibi query at a fixed moment in time}\label{sec:4circles}
\subsection{Introduction}
In this section, we present another example where common sense prevails over the general quantifier-elimination methods. The problem is the following. As in the previous setting, we have lists of time stamped-locations of two moving objects and upper bounds on the object's speed between time stamps. We wish to know if two objects could have met at a given moment in time.

For the remainder of this section, we reuse the assumptions from the previous section. We wish to verify if the
beads $\B_1=\bead{t_1}{x_1}{y_1}{t_2}{x_2}{y_2}{v_1}$ and
$\B_2=\bead{t_3}{x_3}{y_3}{t_4}{x_4}{y_4}{v_2}$ intersect at a moment in time $t_0$. Moreover,
we assume the beads are non-empty, i.e., $(x_2-x_1)^2+(y_2-y_1)^2
\leq (t_2-t_1)^2v_1^2$ and $(x_4-x_3)^2+(y_4-y_3)^2 \leq
(t_4-t_3)^2v_2^2$ and that $t_1\leq t_0 \leq t_2$ and $t_3\leq t_0 \leq t_4$ are satisfied.
This means we need to eliminate the quantifiers in
$$\displaylines{\exists x\exists y \left( (x-x_1)^2+(y-y_1)^2\leq v_1^2(t_0-t_1)^2\ \land \ (x-x_2)^2+(y-y_2)^2\leq v_1^2(t_0-t_2)^2\right.
\cr\hfill{} \left.
\land \ (x-x_3)^2+(y-y_3)^2\leq v_2^2(t_0-t_3)^2\ \land \ (x-x_4)^2+(y-y_4)^2\leq v_2^2(t_0-t_4)^2\right) .}
$$

Eliminating quantifiers gives us a formula that decide whether or not four discs have a non-empty intersection. For ease of notation we will use the following abbreviations: $(x,y)\in D_i$ if and only if $(x-x_i)^2+(y-y_i)\leq r_i^2$ and $(x,y)\in C_i$ if and only if $(x-x_i)^2+(y-y_i)= r_i^2$.

\subsection{Main theorem}

Using Helly's theorem we can simplify the problem even more. Helly's theorem states that if you have a set $S$ of $m$ convex sets in $n$ dimensional space and if any subset of $S$ of $n+1$ convex sets has a non-empty intersection, then all $m$ convex sets have a non-empty intersection. For the plane, this means we only need to find a quantifier free-formula that decides if three discs have a non-empty intersection. For the remainder of this section assume that we want to verify whether $D_1\cap D_2\cap D_3$ is non-empty.

\begin{theorem}\label{littleconjecture}
Three discs, $D_1$, $D_2$ and $D_3$, have a non-empty intersection if and only if one of the following cases occur:
\begin{enumerate}
\item there is  a disc whose center is in the other two discs; or
\item the previous case does not occur and there exists a pair of discs for which one of both intersection points of their bordering circles lies in the remaining disc.
\end{enumerate}
\end{theorem}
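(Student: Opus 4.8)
The plan is to prove the two implications separately, with the forward (sufficiency) direction being essentially free and all the work concentrated in the converse. For sufficiency I would simply exhibit a witness point lying in all three discs: in the first case the distinguished disc's center, which lies in the other two discs by hypothesis and trivially in its own; in the second case the common point of the two bordering circles $C_i$ and $C_j$, which lies in the closed discs $D_i$ and $D_j$ and, by hypothesis, in the remaining disc $D_k$. Either way $D_1\cap D_2\cap D_3\neq\emptyset$.

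For the converse I would set $K:=D_1\cap D_2\cap D_3$, assume $K\neq\emptyset$, and note that $K$ is compact and convex, being a finite intersection of compact convex sets. The dichotomy I would use is whether some pair of bordering circles already meets inside the third disc. If $C_i\cap C_j\cap D_k\neq\emptyset$ for some labelling, the second case holds and I am done; so the substantive part is to show that when no such circle--circle intersection point lies in the remaining disc, $K$ must coincide with one of the discs, which immediately delivers the first case (that disc's center then lies in the other two).

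The key step --- and what I expect to be the main obstacle --- is this structural claim about $\partial K$. Assuming $K$ has non-empty interior, $\partial K$ is a Jordan curve contained in $C_1\cup C_2\cup C_3$. I would first observe that no boundary point can lie on two of the circles at once, since such a point would sit in some $C_i\cap C_j$ and, being in $K\subseteq D_k$, would contradict the standing assumption. The sets $A_i:=\partial K\cap C_i$ are then pairwise disjoint closed sets covering the connected curve $\partial K$, so exactly one is non-empty; hence $\partial K$ lies on a single circle $C_i$, and since a connected proper subset of a circle is an arc (and not a closed curve), $\partial K=C_i$ and $K=D_i$. This gives $D_i\subseteq D_j$ and $D_i\subseteq D_k$, i.e.\ the first case. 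The delicate points here are justifying the connectedness of $\partial K$ and the ``single circle forces the whole circle'' step; I expect these topological observations, rather than any computation, to require the most care.

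Finally I would dispose of the degenerate case where $K$ is non-empty but has empty interior. Here I would argue that $K$ cannot be a non-degenerate segment: the intersection of any two discs is empty, a single point, or a two-dimensional lens, and since a circle contains no line segment, intersecting such a lens with a third disc cannot carve out a one-dimensional piece. This leaves $K$ a single point $p$. Since $p$ cannot be interior to two of the discs (a neighbourhood of $p$ would then lie in their intersection, and intersecting with the third disc would still leave a two-dimensional piece, contradicting $K=\{p\}$), it lies on at least two circles, say $p\in C_i\cap C_j$, and being in $D_k$ it again realizes the second case. Assembling these pieces yields the stated equivalence, the mutual exclusivity of the two cases being built into the phrasing of the second.
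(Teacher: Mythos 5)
Your proposal is correct and follows essentially the same route as the paper: the forward direction is witnessed directly, and the converse splits into the cases where the intersection is a single point versus two-dimensional, with the latter resolved by showing that the boundary of the intersection is covered by the three circles and must contain a point where two distinct circles meet unless it lies entirely on one circle (which forces case (1)). Your version merely replaces the paper's informal ``traveling along the boundary'' argument with an explicit connectedness/disjoint-closed-cover argument, and additionally disposes of the degenerate segment case that the paper passes over silently; both refinements are sound and do not change the underlying idea.
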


\begin{proof}
The \emph{if}-direction is trivial. The \emph{only if}-direction is less trivial.
We will use the following abbreviations, $D=D_1\cap D_2\cap D_3$ and $C=\partial D$.

Assume $D$ is non-empty and that neither (1) nor (2) holds. The intersection $D$ is convex as it is the intersection of convex sets. We distinguish between the case where $D$ is  a point or and the case where $D$ is not a point.
\begin{itemize}
\item Suppose $D$ is a single point $p$. This point $p$ can not lie in the interior of the three discs, because $D$ would not be a point then.

    Nor can $p$ lie in the interior of two discs. If that would be the case then there exists a neighborhood of $p$ that is part of the intersection of those two discs, say $D_1$ and $D_2$. Moreover $p$ would be part of $C_3$ and this neighborhood would intersect the interior of $D_3$. This means $D$ is not a point.

    So $p$ must lie on the border of two discs, say $D_1$ and $D_2$, and $p$ must also be part of $D_3$ because $D=\{p\}$. This contradicts our assumption that (2) does not hold.
\item Assume $D$ is not a point. All points on $C$ belong to at least one $C_i$. If there is a point that does not belong to any $C_i$, then it is in the interior of all $D_i$ and there exists a neighborhood of that point that is in the interior of all $D_i$ and hence in $D$. That contradicts to the fact that this point is in $C$.

    Furthermore, not all points of $C$ belong to a single $C_i$. If that was the case then $D_i$ would be part of (and equal to) $D$ and its center would be inside the other two discs which contradicts the assumption that (1) does not hold.

    So, $C$ is made up of parts of the $C_i$, of which some may coincide but not all of them. When traveling along $C$ you will encounter a point $p$ that connects part of a $C_i$ and part of a $C_j$, where $i\neq j$, that do not coincide, otherwise (1) must occur again which is a contradiction. However, this $p$ also yields to a contradiction since it belongs to two different $C_i$, say $C_1$ and $C_2$, and is part of $C$ hence $D$ and $D_3$. This contradicts the assumption that (2) does not occur.\qed
\end{itemize}
\end{proof}

\subsection{Translating the theorem in a formula}\label{4circlestoformula}

We can simplify the equations even further using coordinate transformations. By applying a translation, rotation and scaling we may assume that $(x_1,y_1)=(0,0)$, $x_2\geq 0$, $r_1=1$ and $y_2=0$.
Using these simplifications and translating Theorem~\ref{littleconjecture}, we get the following formula.
$$\displaylines{\Psi_1(x_2,r_2,x_3,y_3,r_3):=
\left((-x_2)^2\leq r_2^2\ \land \ (-x_3)^2+(-y_3)^2\leq r_3^2\right)\ \lor \hfill{}\cr\hfill{}  \exists x\exists y\left( x^2+y^2=1\ \land \ (x-x_2)^2+y^2=r_2^2\ \land \ (x-x_3)^2+(y-y_3)^2\leq r_3^2\right)\ .}
$$
This is a formula that decides if either the center of the first disc is part of the two other discs, see the first line, or if either there exists a point in the intersection of the first two circles that is part of the third disc. All that remains now is making the expression $$\exists x\exists y \left( x^2+y^2=1\ \land \ (x-x_2)^2+y^2=r_2^2\ \land \ (x-x_3)^2+(y-y_3)^2\leq r_3^2\right)$$ quantifier free.

To do this we assume that $C_1$ and $C_2$ do not coincide but have a non-empty intersection. This is equivalent to $x_2\neq 0 \ \land \ x_2\leq r_2+1$. Next, we need to compute the point(s) where $C_1$ and $C_2$ intersect.

$$\left\{
\begin{array}{l}
x^2+y^2 = 1\\
(x-x_2)^2+ y^2 = r_2^2
\end{array}
\right.
\mbox{ or }\left\{
\begin{array}{l}
x=\frac{x_2^2+1-r_2^2}{2x_2}\\
y=\pm\sqrt{1-\left(\frac{x_2^2+1-r_2^2}{2x_2}\right)^2}
\end{array}
\right.
$$
$$
\mbox{ or }\left\{
\begin{array}{l}
x=\frac{x_2^2+1-r_2^2}{2x_2}\\
y=\pm\sqrt{\left(1-\frac{x_2^2+1-r_2^2}{2x_2}\right)\left(1+\frac{x_2^2+1-r_2^2}{2x_2}\right)}
\end{array}
\right.
$$
$$\mbox{ or }\left\{
\begin{array}{l}
x=\frac{x_2^2+1-r_2^2}{2x_2}\\
y=\pm\frac{1}{2x_2}\sqrt{\left(r_2^2-\left(x_2-1\right)^2\right)\left(\left(1+x_2\right)^2-r_2^2\right).}
\end{array}
\right.
$$
If $y=0$, then verifying if that single point of intersection is part of $D_3$ is easy, one only needs to verify if
$$ \left(\frac{x_2^2+1-r_2^2}{2x_2} - x_2\right)^2+y_3^2\leq r_3^2$$
$$\mbox{ or equivalently }\left(1-r_2^2 - x_2^2\right)^2+4x_2^2y_3^2\leq 4x_2^2r_3^2\ .$$
If $y\neq0$, then verifying if one both points of intersection is part of $D_3$ is less trivial, since this involves square roots
$$ \left(\frac{x_2^2+1-r_2^2}{2x_2} - x_3\right)^2+\left(\pm\frac{1}{2x_2}\sqrt{\left(r_2^2-\left(x_2-1\right)^2\right)
\left(\left(1+x_2\right)^2-r_2^2\right)}-y_3\right)^2\leq r_3^2$$
$$\left(x_2^2+1-r_2^2-2x_2x_3\right)^2+\left(\pm\sqrt{\left(r_2^2-\left(x_2-1\right)^2\right)
\left(\left(1+x_2\right)^2-r_2^2\right)}-2x_2y_3\right)^2\leq 4x_2^2r_3^2$$
$$\displaylines{\mbox{ or  }\left(x_2^2+1-r_2^2-2x_2x_3\right)^2+\left(r_2^2-\left(x_2-1\right)^2\right)\left(\left(1+x_2\right)^2-r_2^2\right)
\cr+(2x_2y_3)^2\pm4x_2y_3\sqrt{\left(r_2^2-\left(x_2-1\right)^2\right)\left(\left(1+x_2\right)^2-r_2^2\right)}\leq 4x_2^2r_3^2}$$
$$\displaylines{\mbox{ or }\left(x_2^2+1-r_2^2-2x_2x_3\right)^2+\left(r_2^2-\left(x_2-1\right)^2\right)\left(\left(1+x_2\right)^2-r_2^2\right)
\cr+(2x_2y_3)^2-4x_2^2r_3^2\leq \pm4x_2y_3\sqrt{\left(r_2^2-\left(x_2-1\right)^2\right)\left(\left(1+x_2\right)^2-r_2^2\right)}.}$$

This is almost a \fop-formula except for the square root. However, the square root can be eliminated as we will show next.
The previous expression is of the form $L\leq \pm a\sqrt{W}$. The presence of the $\pm$ simplifies this a lot, this means either sign of the square root will do, and also that we may assume the right hand-side is positive. Of course the square root must exist as well, this means $W\geq 0$.

This expression can then be simplified to
$$W\geq 0 \ \land \ \left(L\leq 0 \ \lor \ L^2\leq a^2W\right)$$
and gives us the expression

$$\displaylines{\Phi_2(x_2,r_2,x_3,y_3,r_3):=\left(r_2^2-\left(x_2-1\right)^2\right)\left(\left(1+x_2\right)^2-r_2^2\right)\geq 0 \hfill{}\cr\hfill{}\land \
\left(\left(x_2^2+1-r_2^2-2x_2x_3\right)^2+\left(r_2^2-\left(x_2-1\right)^2\right)\left(\left(1+x_2\right)^2-r_2^2\right)
+(2x_2y_3)^2\right.
\cr\hfill{}-4x_2^2r_3^2\leq 0 \ \lor\ \left(
\left(x_2^2+1-r_2^2-2x_2x_3\right)^2+\left(r_2^2-\left(x_2-1\right)^2\right)\left(\left(1+x_2\right)^2-r_2^2\right)
\right.\cr\hfill{}\left.\left.+(2x_2y_3)^2-4x_2^2r_3^2\right)^2\leq \left(4x_2y_3\right)^2\left(r_2^2-\left(x_2-1\right)^2\right)\left(\left(1+x_2\right)^2-r_2^2\right)\right).}$$

\subsection{The safety formula}

Now, all that remains  to be constructed is a formula that returns the convenient coordinates and a formula that guarantees that $C_1$ and $C_2$ actually intersect for safety, i.e., to exclude the case of empty intersection. The latter is constructed as follows. The formula $\varphi(x_1,y_1,r_1,x_2,y_2,r_2)$ returns {{\sc True}} if and only if the two circles, with centers $(x_1,y_1)$ and $(x_2,y_2)$ and radii $r_1$ and $r_2$ respectively, have a distance between their centers that is not larger that the sum of their radii and not equal to zero to ensure they do not coincide. We have
$$\varphi(x_1,y_1,r_1,x_2,y_2,r_2):=0<(x_2-x_1)^2+(y_2-y_1)^2\leq (r_1+r_2)^2\ .$$
The formula $\phi(x_1,y_1,r_1,x_2,y_2,r_2)$ returns {{\sc True}} if and only if the second circle is not fully enclosed by the first, i.e., the sum of the distance between the centers plus the second radius is bigger than the first radius and vice versa.
We can write
$$\phi(x_1,y_1,r_1,x_2,y_2,r_2):=(x_2-x_1)^2+(y_2-y_1)^2\geq (r_1-r_2)^2\ .$$
These two safety conditions give us our safety formula
$$\Phi_{\mbox{safety}}(x_1,y_1,r_1,x_2,y_2,r_2):=\varphi(x_1,y_1,r_1,x_2,y_2,r_2)\ \land \ \phi(x_1,y_1,r_1,x_2,y_2,r_2)\ .$$

\subsection{The change of coordinates}
The transformation consists of a translation, rotation and scaling. The translation to move the first circle's center to the origin. The rotation to align the second center with the $x$-axis. Finally the scaling to ensure that the first circle's radius is equal to one.
First, the translation $T(x,y):=(x-x_1,y-y_1).$
The rotation
 is $$ R(x,y):=\frac{1}{\sqrt{(x_2-x_1)^2+(y_2-y_1)^2}}
\left(
\begin{array}{cc}
x_2-x_1&y_2-y_1\\
y_1-y_2&x_2-x_1
\end{array}
\right)
\left(
\begin{array}{c}
x\\
y
\end{array}
\right)\ $$
and finally, the scaling is
$S(x,y):=\frac{1}{r_1}(x,y).$
The transformation is then a composition of those three transformations $A(x,y)=(S\circ R \circ T)(x,y):=$
$$\left(\frac{(x_2-x_1)(x-x_1)+(y_2-y_1)(y-y_1)}{r_1\sqrt{(x_2-x_1)^2+(y_2-y_1)^2}},
\frac{(y_1-y_2)(x-x_1)+(x_2-x_1)(y-y_1)}{r_1\sqrt{(x_2-x_1)^2+(y_2-y_1)^2}}\right)\ .$$
The following formula takes three circles with centers $(x_i,y_i)$ and radii $r_i$ respectively, and transforms them in three new circles where the first circle has center $(0,0)$ and radius 1, the second circle has center $(x'_2,0)$ and radius $r'_2$ and the third circle has center $(x'_3,y'_3)$ and radius $r'_3$.
$$\displaylines{\Phi_{\mbox{transformation}}(x_1,y_1,r_1,x_2,y_2,r_2,x_3,y_3,r_3,x'_2,r'_2,x'_3,y'_3,r'_3):=
\hfill{}\cr
\hfill{}x'_2=\frac{\sqrt{(x_2-x_1)^2+(y_2-y_1)^2}}{r_1}\ \land \ r'_2=\frac{r_2}{r_1}\ \land \ r'_3=\frac{r_3}{r_1}\ \land\cr
\hfill{}x'_3=\frac{(x_2-x_1)(x_3-x_1)+(y_2-y_1)(y_3-y_1)}{r_1\sqrt{(x_2-x_1)^2+(y_2-y_1)^2}}\ \land \cr
\hfill{}y'_3=\frac{(y_1-y_2)(x_3-x_1)+(x_2-x_1)(y_3-y_1)}{r_1\sqrt{(x_2-x_1)^2+(y_2-y_1)^2}}\ \ .}$$
Note that this is not a \fop-formula anymore due to the square roots and fractions. This "formula" is meant to act like a function, which substitutes coordinates. The substituted coordinates have fractions and square roots but these can easily be disposed of when having the entire inequality on a common denominator, isolating the square root and squaring the inequality, as we showed in Section~\ref{4circlestoformula}.

\subsection{The  formula for the alibi query at a fixed moment in time}
First, we construct a formula that checks for any of two circles out of three if any of the conditions in Theorem~\ref{littleconjecture} are satisfied.
$$\displaylines{\Psi_{\mbox{2/3}}(x_1,y_1,r_1,x_2,y_2,r_2,x_3,y_3,r_3)\ :=\ \exists x'_2\exists r'_2\exists x'_3\exists y'_3\exists r'_3
\left(
\right.
\hfill{}\cr
\hfill{}\Phi_{\mbox{transformation}}(x_1,y_1,r_1,x_2,y_2,r_2,x_3,y_3,r_3,x'_2,r'_2,x'_3,y'_3,r'_3)\ \land \cr\hfill{}
\left(\Psi_1(x'_2,r'_2,x'_3,y'_3,r'_3)\ \lor \ \Phi_{\mbox{safety}}(x_1,y_1,r_1,x_2,y_2,r_2)\ \land\ \Phi_2(x'_2,r'_2,x'_3,y'_3,r'_3)\right)
\cr\lor \hfill{}
\Phi_{\mbox{transformation}}(x_1,y_1,r_1,x_3,y_3,r_3,x_2,y_2,r_2,x'_2,r'_2,x'_3,y'_3,r'_3)\ \land \cr\hfill{}
\left(\Psi_1(x'_2,r'_2,x'_3,y'_3,r'_3)\ \lor \ \Phi_{\mbox{safety}}(x_1,y_1,r_1,x_3,y_3,r_3)\ \land\ \Phi_2(x'_2,r'_2,x'_3,y'_3,r'_3)\right)
\cr\lor \hfill{}
\Phi_{\mbox{transformation}}(x_2,y_2,r_2,x_3,y_3,r_3,x_1,y_1,r_1,x'_2,r'_2,x'_3,y'_3,r'_3)\ \land \cr\hfill{}
\left.
\left(\Psi_1(x'_2,r'_2,x'_3,y'_3,r'_3)\ \lor \ \Phi_{\mbox{safety}}(x_2,y_2,r_2,x_3,y_3,r_3)\ \land\ \Phi_2(x'_2,r'_2,x'_3,y'_3,r'_3)\right)\right)}$$

This formula is all we need to incorporate Helly's theorem in our final formula. Four discs have a non-empty intersection if and only if the following formula is satisfied
$$\displaylines{\Psi(x_1,y_1,r_1,x_2,y_2,r_2,x_3,y_3,r_3,x_4,y_4,r_4):=\hfill{}
\cr\hfill{}\Psi_{\mbox{2/3}}(x_1,y_1,r_1,x_2,y_2,r_2,x_3,y_3,r_3)\ \land\ \Psi_{\mbox{2/3}}(x_1,y_1,r_1,x_2,y_2,r_2,x_4,y_4,r_4)\  \hfill{}
\cr\hfill{}\land\ \Psi_{\mbox{2/3}}(x_1,y_1,r_1,x_3,y_3,r_3,x_4,y_4,r_4)\  \land\ \Psi_{\mbox{2/3}}(x_2,y_2,r_2,x_3,y_3,r_3,x_4,y_4,r_4).\quad}$$

This is almost a quantifier free-formula except for the fractions and square roots. However, as we showed before these can easily be disposed of. We omitted these tedious conversions for the sake of clarity.

\section{Conclusion}
In this paper, we proposed a method that decides if two beads have a
non-empty intersection or not. Existing quantifier-elimination methods could achieve this
already through means of quantifier elimination though not in a
reasonable amount of time. Deciding intersection of concrete beads
took of the order of minutes, while the parametric case could be
measured at least in days if a solution would ever be obtained. The parametric solution we laid out in
this paper only takes a few milliseconds or less.

The solution we present is a first-order formula containing square root-expressions. These can easily be disposed of using repeated squarings and adding extra conditions, thus obtaining a true quantifier free-expression for the alibi query.

We also give a  solution to the alibi query at a fixed moment in time.

The solutions we propose are based on geometric argumentation and they
illustrate the fact that some practical problems require  creative solutions, where at least in theory, existing systems could provide a solution.

\section*{Acknowledgements}
  This research has been partially funded by the European Union under the FP6-IST-FET programme,
  Project n. FP6-14915, GeoPKDD: Geographic Privacy-Aware Knowledge Discovery and Delivery, and  by the Research Foundation Flanders (FWO-Vlaanderen), Research Project
G.0344.05.

\bibliographystyle{acmtrans}\bibliographystyle{plain}

\begin{thebibliography}{}

\bibitem[\protect\citeauthoryear{Basu, R., and Roy}{Basu
  et~al\mbox{.}}{1996}]{basu}
{\sc Basu, S.}, {\sc R., P.}, {\sc and} {\sc Roy, M.-F.} 1996.
\newblock On the combinatorial and algebraic complexity of quantifier
  elimination.
\newblock {\em Journal of the ACM\/}~{\em 43}, 1002--1045.

\bibitem[\protect\citeauthoryear{Collins}{Collins}{1975}]{collins}
{\sc Collins, G.} 1975.
\newblock Quantifier elimination for real closed fields by cylindrical
  algebraic decomposition.
\newblock In {\em Automata Theory and Formal Languages}. Lecture Notes in
  Computer Science, vol.~33. Springer-Verlag, 134--183.

\bibitem[\protect\citeauthoryear{Egenhofer}{Egenhofer}{2003}]{egenhofer}
{\sc Egenhofer, M.} 2003.
\newblock Approximation of geopatial lifelines.
\newblock In {\em SpadaGIS, Workshop on Spatial Data and Geographic Information
  Systsems}. University of Genova.
\newblock Electr. proceedings, 4p.

\bibitem[\protect\citeauthoryear{Geerts}{Geerts}{2004}]{floris}
{\sc Geerts, F.} 2004.
\newblock Moving objects and their equations of motion.
\newblock In {\em Constraint Databases, Proceedings of the 1st International
  Symposium on Applications of Constraint Databases, (CDB'04)}. Lecture Notes
  in Computer Science, vol. 3074. Springer, 41--52.

\bibitem[\protect\citeauthoryear{Grigor'ev and Vorobjov}{Grigor'ev and
  Vorobjov}{1988}]{russian}
{\sc Grigor'ev, D.} {\sc and} {\sc Vorobjov, N. N.~j.} 1988.
\newblock Solving systems of polynomial inequalities in subexponential time.
\newblock {\em Journal of Symbolic Computation\/}~{\em 5}, 37--64.

\bibitem[\protect\citeauthoryear{G\"uting and Schneider}{G\"uting and
  Schneider}{2005}]{guting}
{\sc G\"uting, R.} {\sc and} {\sc Schneider, M.} 2005.
\newblock {\em Moving Object Databases}.
\newblock Morgan Kaufmann.

\bibitem[\protect\citeauthoryear{H\"agerstrand}{H\"agerstrand}{1970}]{hagerstr%
and}
{\sc H\"agerstrand, T.} 1970.
\newblock What about people in regional science?
\newblock {\em Papers of the Regional Science Association\/}~{\em 24}, 7--21.

\bibitem[\protect\citeauthoryear{Heintz and Kuijpers}{Heintz and
  Kuijpers}{2004}]{bart-joos}
{\sc Heintz, J.} {\sc and} {\sc Kuijpers, B.} 2004.
\newblock Constraint databases, data structures and efficient query evaluation.
\newblock In {\em Constraint Databases, Proceedings of the 1st International
  Symposium ``Applications of Constraint Databases'' (CDB'04)}. Lecture Notes
  in Computer Science, vol. 3074. Springer-Verlag, 1--24.

\bibitem[\protect\citeauthoryear{Heintz, Roy, and Solern\'o}{Heintz
  et~al\mbox{.}}{1990}]{hrs}
{\sc Heintz, J.}, {\sc Roy, M.-F.}, {\sc and} {\sc Solern\'o, P.} 1990.
\newblock Sur la complexit\'e du principe de {Tarski-Seidenberg}.
\newblock {\em Bulletin de la Soci\'et\'e Math\'ematique de France\/}~{\em
  118}, 101--126.

\bibitem[\protect\citeauthoryear{Hong}{Hong}{1990}]{qepcad}
{\sc Hong, H.} 1990.
\newblock {QEPCAD} --- quantifier elimination by partial cylindrical algebraic
  decomposition.
\newblock http://www.cs.usna.edu/\~{}qepcad/B/QEPCAD.html.

\bibitem[\protect\citeauthoryear{Hornsby and Egenhofer}{Hornsby and
  Egenhofer}{2002}]{hornsby}
{\sc Hornsby, K.} {\sc and} {\sc Egenhofer, M.} 2002.
\newblock Modeling moving objects over multiple granularities.
\newblock {\em Annals of Mathematics and Artificial Intelligence\/}~{\em
  36,\/}~1--2, 177--194.

\bibitem[\protect\citeauthoryear{Kanellakis, Kuper, and Revesz}{Kanellakis
  et~al\mbox{.}}{1995}]{kkr95}
{\sc Kanellakis, P.}, {\sc Kuper, G.}, {\sc and} {\sc Revesz, P.} 1995.
\newblock Constraint query languages.
\newblock {\em Journal of Computer and System Science\/}~{\em 51,\/}~1, 26--52.
\newblock A preliminary report appeared in the {\em Proceedings 9th ACM
  Symposium on Principles of Database Systems (PODS'90)}.

\bibitem[\protect\citeauthoryear{Kuijpers and Othman}{Kuijpers and
  Othman}{2007}]{ons-icdt}
{\sc Kuijpers, B.} {\sc and} {\sc Othman, W.} 2007.
\newblock Trajectory databases: Data models, uncertainty and complete query
  languages.
\newblock In {\em Proceedings of the 11th International Conference on Database
  Theory (ICDT'07)}. Lecture Notes in Computer Science, vol. 4353.
  Springer-Verlag, 224--238.

\bibitem[\protect\citeauthoryear{Miller}{Miller}{2005}]{miller}
{\sc Miller, H.} 2005.
\newblock A measurement theory for time geography.
\newblock {\em Geographical Analysis\/}~{\em 37,\/}~1, 17--.

\bibitem[\protect\citeauthoryear{Paredaens, Kuper, and Libkin}{Paredaens
  et~al\mbox{.}}{2000}]{cdbook}
{\sc Paredaens, J.}, {\sc Kuper, G.}, {\sc and} {\sc Libkin, L.} 2000.
\newblock {\em Constraint databases}.
\newblock Springer-Verlag.

\bibitem[\protect\citeauthoryear{Paredaens, {Van den Bussche}, and {Van
  Gucht}}{Paredaens et~al\mbox{.}}{1994}]{pvv_pods94}
{\sc Paredaens, J.}, {\sc {Van den Bussche}, J.}, {\sc and} {\sc {Van Gucht},
  D.} 1994.
\newblock Towards a theory of spatial database queries.
\newblock In {\em Proceedings of the 13th ACM Symposium on Principles of
  Database Systems (PODS'94)}. ACM Press, New York, 279--288.

\bibitem[\protect\citeauthoryear{Pfoser and Jensen}{Pfoser and
  Jensen}{1999}]{pfoser}
{\sc Pfoser, D.} {\sc and} {\sc Jensen, C.~S.} 1999.
\newblock Capturing the uncertainty of moving-object representations.
\newblock In {\em Advances in Spatial Databases (SSD'99)}. Lecture Notes in
  Computer Science, vol. 1651. 111--132.

\bibitem[\protect\citeauthoryear{Renegar}{Renegar}{1992}]{renegar}
{\sc Renegar, J.} 1992.
\newblock On the computational complexity and geometry of the first-order
  theory of the reals {I}, {II}, {III}.
\newblock {\em Jornal of Symbolic Computation\/}~{\em 13}, 255--352.

\bibitem[\protect\citeauthoryear{Revesz}{Revesz}{2002}]{revesz}
{\sc Revesz, P.} 2002.
\newblock {\em Introduction to Constraint Databases}.
\newblock Springer-Verlag.

\bibitem[\protect\citeauthoryear{Sturm}{Sturm}{2000}]{redlog}
{\sc Sturm, T.} 2000.
\newblock Redlog.
\newblock http://www.algebra.fim.uni-passau.de/\~{}redlog/.

\bibitem[\protect\citeauthoryear{Su, Xu, and Ibarra}{Su
  et~al\mbox{.}}{2001}]{su}
{\sc Su, J.}, {\sc Xu, H.}, {\sc and} {\sc Ibarra, O.} 2001.
\newblock Moving objects: Logical relationships and queries.
\newblock In {\em Advances in Spatial and Temporal Databases (SSTD'01)}.
  Lecture Notes in Computer Science, vol. 2121. Springer, 3--19.

\bibitem[\protect\citeauthoryear{Tarski}{Tarski}{1951}]{tarski}
{\sc Tarski, A.} 1951.
\newblock {\em A Decision Method for Elementary Algebra and Geometry}.
\newblock University of California Press.

\bibitem[\protect\citeauthoryear{Wolfram}{Wolfram}{2007}]{mathematica}
{\sc Wolfram}. 2007.
\newblock Mathematica 6.
\newblock http://www.wolfram.com.

\bibitem[\protect\citeauthoryear{Wolfson}{Wolfson}{2002}]{wolfson}
{\sc Wolfson, O.} 2002.
\newblock Moving objects information management: The database challenge.
\newblock In {\em Proceedings of the 5th Intl. Workshop NGITS}. Springer,
  75--89.

\end{thebibliography}

\section*{Appendix: The {\sc Mathematica} implementation}

The function \textbf{alibi}, using the method described in this paper, returns {{\sc True}} if the bead with apexes $(t1,x1,y1)$ and $(t2,x2,y2)$ and speed $v1$ intersects the bead with apexes $(t3,x3,y3)$ and $(t4,x4,y4)$ and speed $v2$ and {\sf False} otherwise.
The function \textbf{alibiQE} does the same except it uses the built-in quantifier elimination method of {\sc Mathematica}.

Note that Case \textbf{(II)} in our implementation corresponds to Case \textbf{(III)} in the description and vice versa. The reason for doing so is that Case \textbf{(III)}, in the description, is computationally a lot easier than Case \textbf{(II)}. Moreover, once any of the three cases returns {{\sc True}}, our implementation exits and returns that result and thus omitting further computation.

\begin{figure}[h]
\centering
\epsfig{file=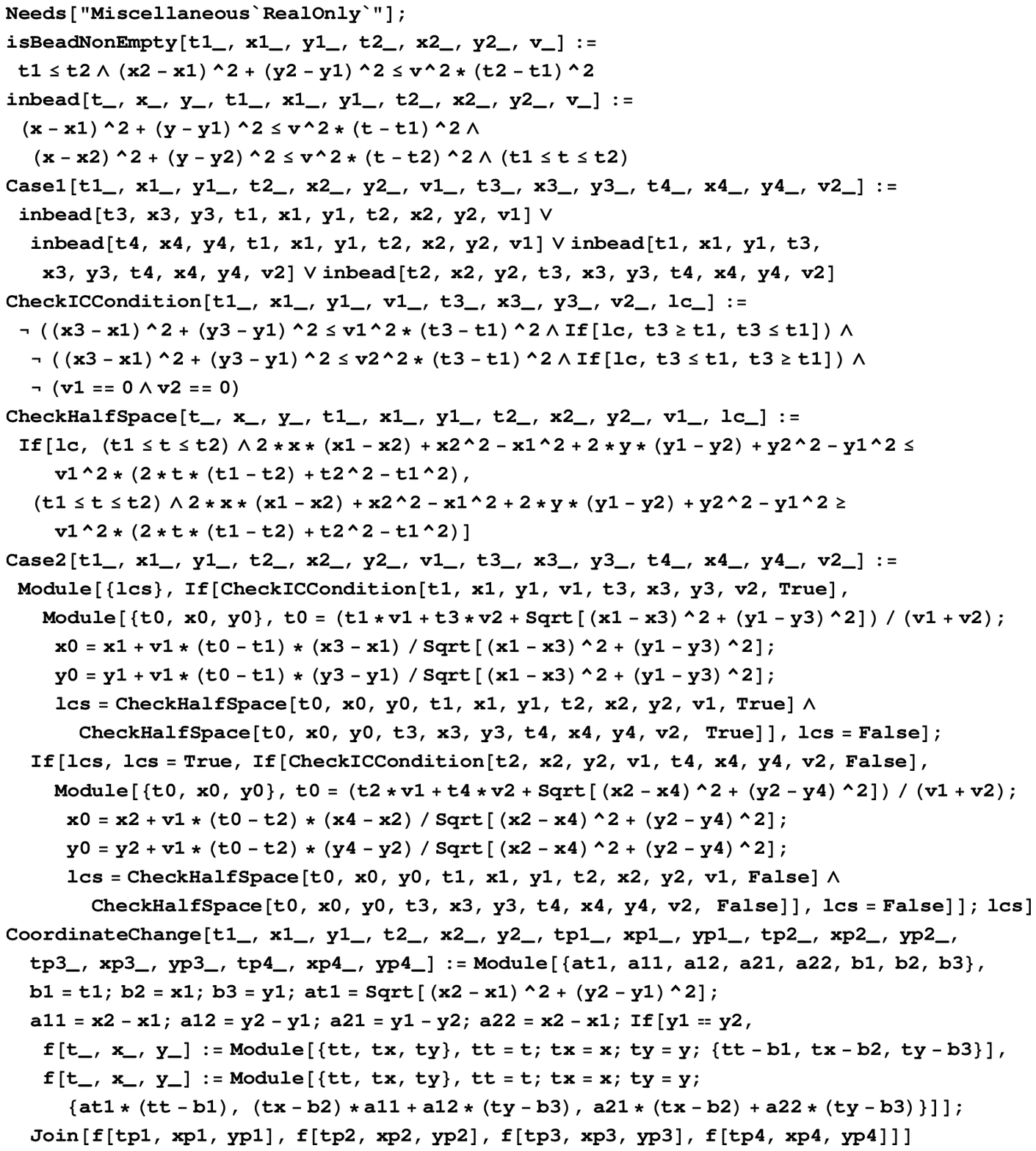, width=5in, height=7in}
\end{figure}
\clearpage
\begin{figure}
\centering
\epsfig{file=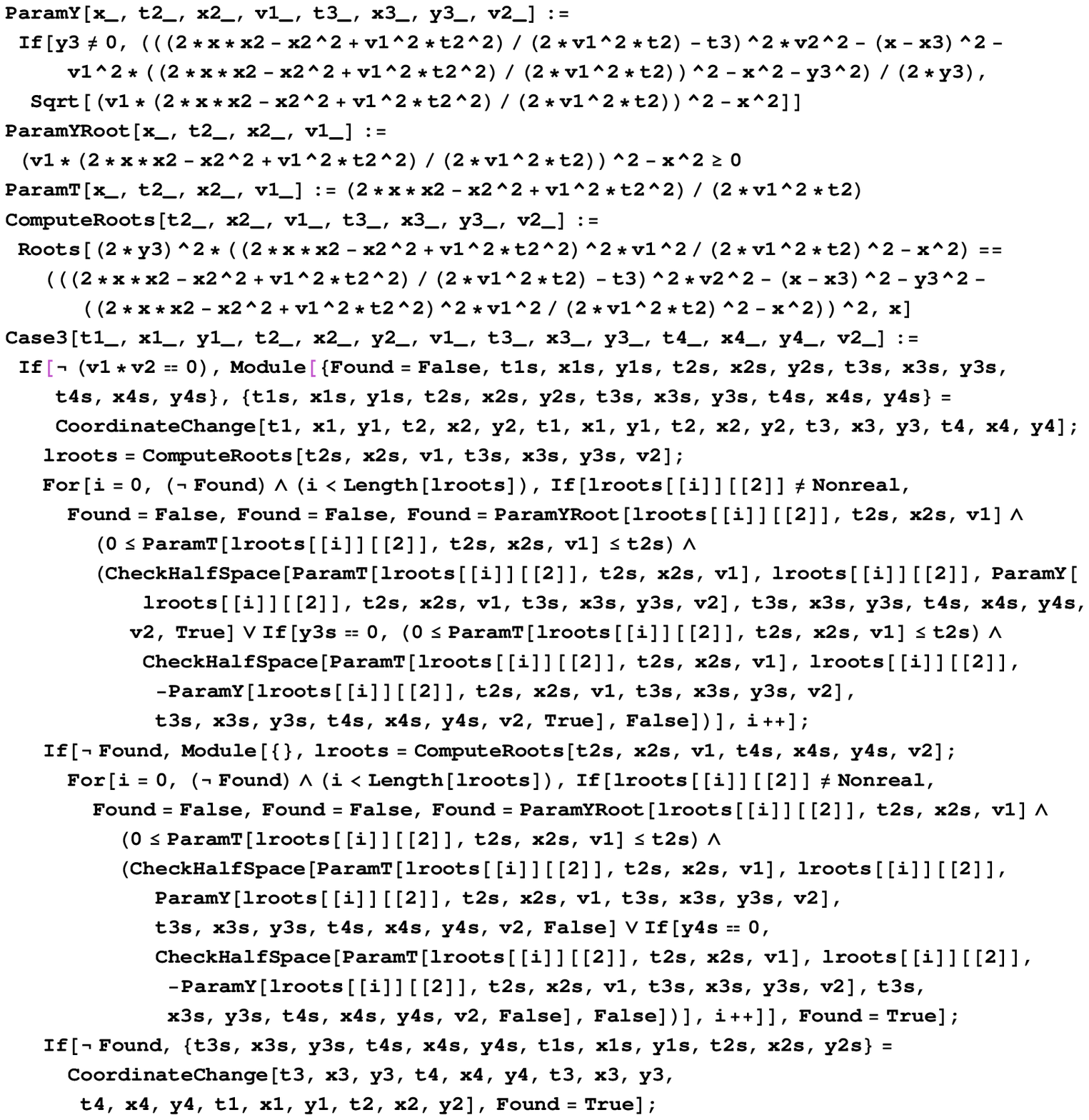, width=5in, height=7.5in}
\end{figure}
\clearpage
\begin{figure}
\centering
\epsfig{file=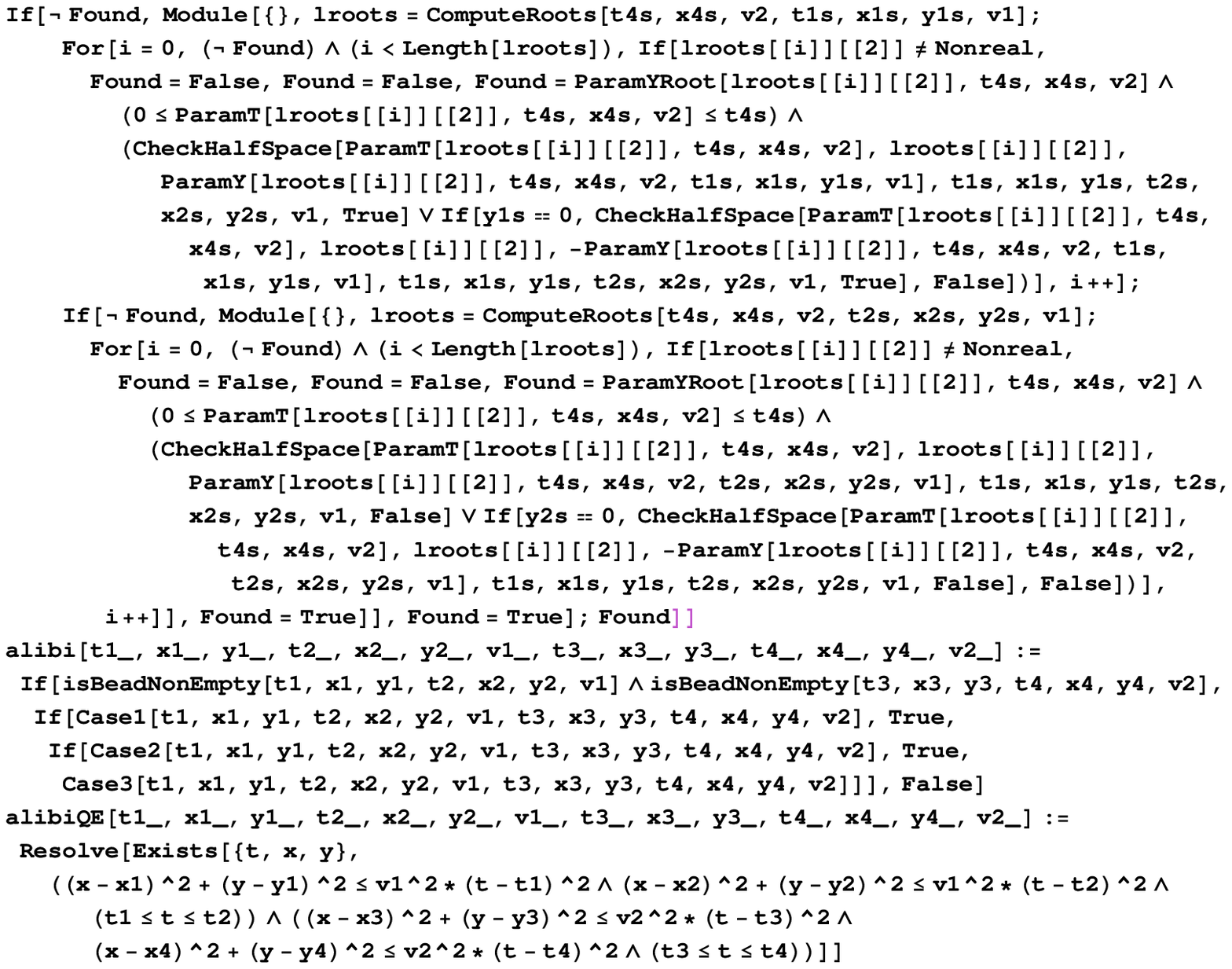, width=5in, height=5.5in}
\end{figure}

\end{document}